\documentclass{article}

\IfFileExists{arxiv.sty}{
  \usepackage{arxiv}
}{
  \usepackage[margin=1in]{geometry}
  
  \providecommand{\keywords}[1]{}
}

\usepackage[utf8]{inputenc} % allow utf-8 input
\usepackage[T1]{fontenc}  % use 8-bit T1 fonts
\usepackage{url}      % simple URL typesetting
\usepackage{booktabs}    % professional-quality tables
\usepackage{amsfonts}    % blackboard math symbols
\usepackage{nicefrac}    % compact symbols for 1/2, etc.
\usepackage{microtype}   % microtypography
\usepackage{graphicx}

\usepackage{amsmath,amsthm,amsfonts,amscd} 
\usepackage{eucal} 	 	% Euler fonts
\usepackage{verbatim}   	% Allows quoting source with commands.
\usepackage{makeidx}    	% Package to make an index.
\usepackage{graphicx}      % Including graphics
\usepackage{cite}     	% 
\usepackage{url}		% Allows good typesetting of web URLs.
\usepackage[font=footnotesize,labelfont=bf,center]{caption}
\usepackage{subcaption}
\usepackage{float}
\usepackage{algpseudocode}
\usepackage{listings} 
\IfFileExists{listings-rust.sty}{
  \usepackage{listings-rust}
}{}
\definecolor{rustkey}{RGB}{170, 55, 0}        % burnt orange for keywords
\definecolor{rusttype}{RGB}{44, 120, 120}      % teal for types/traits
\definecolor{ruststr}{RGB}{58, 135, 58}        % muted green for strings
\definecolor{rustcmt}{RGB}{120, 120, 120}      % gray for comments
\definecolor{rustmacro}{RGB}{130, 80, 160}     % purple for macros/attributes

\lstdefinelanguage{Rust}{
  keywords={
    fn, pub, struct, trait, impl, where, self, mut,
    let, return, if, else, for, while, loop, break,
    continue, match, type, enum, const, static, unsafe,
    use, mod, crate, super, ref, move, async, await, true, false
  },
  keywords=[2]{
    bool, usize, u32, u64, i32, i64, f32, f64, str, String,
    Vec, Option, Result, Some, None, Ok, Err, Box, Arc,
    Send, Sync, Atomize, AtomicBool, AtomicU32,
    GlobalState, LatticeLinearProblem, Worklist, NullWorklist,
    Ordering, UnsafeCell, Debug
  },
  sensitive=true,
  comment=[l]{//},
  morecomment=[s]{/*}{*/},
  string=[b]",
  morestring=[b]',
}
\usepackage{titlesec}
\titlespacing*{\section}{0pt}{12pt plus 2pt minus 2pt}{6pt plus 1pt minus 1pt}
\titlespacing*{\subsection}{0pt}{10pt plus 2pt minus 2pt}{4pt plus 1pt minus 1pt}
\titlespacing*{\subsubsection}{0pt}{8pt plus 2pt minus 2pt}{4pt plus 1pt minus 1pt}
\usepackage[linesnumbered,ruled,vlined]{algorithm2e}
\SetAlFnt{\small}
\SetAlCapFnt{\small}
\SetAlCapNameFnt{\small}
\usepackage{xcolor}
\usepackage[hidelinks]{hyperref}
\usepackage{tikz}
\usetikzlibrary{positioning, arrows, arrows.meta}

\newtheorem{lemma}{Lemma}

\newtheorem{definition}{Definition}
\newtheorem{claim}{Claim}

\newcommand{\CC}{{L}}

\newcommand{\C}{G}
\newcommand{\RR}{\mathbf{R}}

\DeclareMathOperator{\forbidden}{forbidden}

\title{A common parallel framework for LLP combinatorial problems}

\author{
 David Ribeiro Alves \\
 Department of Electrical and Computer Engineering\\
 University of Texas at Austin\\
 Austin, TX 78712 \\
 \texttt{dralves@utexas.edu} \\
  \and
 Vijay K. Garg \\
 Department of Electrical and Computer Engineering\\
 University of Texas at Austin\\
 Austin, TX 78712 \\
 \texttt{garg@utexas.edu} \\
}

\begin{document}
\maketitle
\begin{abstract}
    Traditional lock-free parallel algorithms for combinatorial optimization problems, such as shortest paths, stable matching, and job scheduling require programmers to write problem-specific routines and synchronization code. We propose a general-purpose lock-free runtime, LLP-FW that can solve all combinatorial optimization problems that can be formulated as a Lattice-Linear Predicate by advancing all forbidden local states in parallel until a solution emerges. The only problem-specific code is a definition of the forbiddenness check and a definition of the advancement. We show that LLP-FW can solve several different combinatorial optimization problems, such as Single Source Shortest Paths (SSSP), Breadth-First Search (BFS), Stable Marriage, Job Scheduling, Transitive Closure, Parallel Reduction, and 0-1 Knapsack. We compare LLP-FW against hand-tuned, custom solutions for these seven problems and show that it compares favorably in the majority of cases.
\end{abstract}

% keywords can be removed
\keywords{Parallel Combinatorial Optimization \and Lattice-Linear Predicates \and Shared-Memory \and Lock-Free \and Parallelism}

%%%%%%%%%%%%%%%%%%%%%%%%%%%%%%%%%%%%%%%%%%%%%%%%%%%%%%%%%%%%%%%%%%%%%%
% Introduction for the Arxiv version
%%%%%%%%%%%%%%%%%%%%%%%%%%%%%%%%%%%%%%%%%%%%%%%%%%%%%%%%%%%%%%%%%%%%%%

\section{Introduction}
\label{sec:intro}

Combinatorial optimization problems like \emph{Single Source Shortest Path} (SSSP), \emph{Stable Marriage} (SM), and constrained \emph{Job Scheduling} problems are common in transportation networks, matching markets, and scheduling platforms. Each of these problems has well-established classical algorithms. However, achieving efficient parallel solutions for these problems is usually problem-specific and requires unique synchronization techniques, data structures, and tuning strategies. This requires duplicate engineering effort for each problem and raises the potential for lost opportunities of cross-optimization between problems.

\emph{Lattice-Linear Predicates} (LLP) are a theoretical abstraction that can be used to model a wide range of combinatorial optimization problems. LLPs specify a monotone property over the lattice of feasible solutions such that the violation of the property in the entire solution space implies the presence of at least one \emph{forbidden} local solution. The advancement of forbidden solutions in the solution space leads to the monotone search property, which converges to the feasible (and in many cases optimal) solution. LLPs were first introduced for the detection of predicates in distributed systems~\cite{Chase1998,Garg2015} and later established their power in representing optimization problems~\cite{Garg2018}. In prior work, we applied the LLP abstraction to derive problem-specific parallel algorithms for two classical graph problems: Single Source Shortest Path~\cite{Alves.2020}, and  Minimum Spanning Tree~\cite{Alves2022MST}. Both sets of algorithms achieved competitive or superior performance relative to established baselines, demonstrating that the LLP formulation can produce practical parallel algorithms. However, each of these implementations was hand-written for a specific problem, with its own synchronization code, data structures, and tuning. This paper asks whether the pattern common to both---detecting forbidden states and advancing them in parallel---can be extracted into a reusable runtime that works across many problems without reimplementing concurrency control from scratch.

In this paper we present a single, lock-free, framework, \emph{LLP-FW} (Lattice Linear Predicate Framework), that can be applied across diverse problems without re-implementing concurrency primitives from scratch. We demonstrate this approach on large-scale weighted and unweighted graph problems (SSSP, BFS, Transitive Closure), preference-based matching (Stable Marriage), and precedence-constrained scheduling and optimization kernels (Job Scheduling, Reduction, Knapsack). Throughout all of these domains, progress is uniformly described as the advancement of local forbidden states, and problem-specific logic is encapsulated as lightweight adapters. A key advantage of this design is that, as long as a problem can be expressed as an LLP, the problem specification and the solvers are fully decoupled. This means that a practitioner can implement a new problem by defining only the forbidden-state predicate and the advance function, and then run all available solvers on it to find the best-performing one without writing any concurrency code. Conversely, if a new solver strategy or a new optimization technique within an existing solver is developed, it automatically applies to every problem in the framework without changing any problem-specific code.

Our evaluation shows that LLP-FW achieves strong gains on problems with narrow forbidden frontiers and cascading dependencies: up to \(246\times\) over the parallel Gale--Shapley baseline on Stable Marriage, \(23\times\) over Floyd--Warshall on sparse transitive closure, and \(16\times\) on road-network BFS at 32 threads. On problems where the baseline scales poorly (e.g., knapsack), LLP-FW achieves \(1.8\)--\(3.8\times\) gains. The framework does lose on regular, bandwidth-limited workloads like parallel reduction (\(4\)--\(5\times\) slower), where the per-operation atomic overhead dominates. Across all seven problems, we find that the shape of the forbidden frontier---narrow vs.\ wide---is the best predictor of whether LLP-FW will outperform a specialized baseline.

Our contributions are fourfold:
\begin{enumerate}
 \item \emph{Generic lock-free LLP runtime.} We propose a generic shared memory runtime for combinatorial optimization, where generic parallel scheduling is decoupled from problem-specific forbidden and advance definitions.
 \item \emph{Systems implementation.} Our runtime is implemented using atomic operations and lock-free worklists, both of shared and per-thread types.
 \item \emph{Instantiations for several combinatorial optimization problems.} We apply our generic runtime to several combinatorial optimization problems, including SSSP, BFS, stable marriage, job scheduling, transitive closure, reduction, and knapsack.
 \item \emph{Unified evaluation of our approach.} Our approach is evaluated against strong baselines under a uniform evaluation methodology, highlighting where LLP-FW compares favorably or unfavorably to problem-specific solutions and where gaps remain for future research.
\end{enumerate}

The rest of this paper is organized as follows. Section~\ref{sec:cm-rw} surveys related work in parallel frameworks, lock-free shared memory algorithms, and optimization with LLP. Section~\ref{sec:cm-llp-bg} describes the theoretical background of lattice-linear predicates. Section~\ref{sec:cm-llp-fw} describes the design and implementation of our approach. Section~\ref{sec:cm-eval} presents our evaluation results and discusses our findings. Section~\ref{sec:cm-conclusion} concludes our paper with some discussions on future research directions.

\section{Related Work}
\label{sec:cm-rw}

\subsection{Lattice-Linear Predicates in Optimization}

The concept of lattice-linearity is first introduced as a result of global predicate detection in distributed systems~\cite{Chase1998,Garg2015}. A predicate is said to be lattice-linear if all the violating states contain at least one forbidden coordinate, which is also locally advanceable. This enables the monotonic correction process, ensuring convergence to a feasible state. Subsequently, Garg~\cite{Garg2018} showed that this abstraction is capable of representing a wide variety of combinatorial optimization problems, such as shortest path, stable marriage, and market prices, under a single mathematical structure, with classical algorithms as special cases.

The LLP approach has been further extended in several other directions, with dynamic programming problems, such as the knapsack problem, being modeled and solved using parallel algorithms similar to the LLP approach~\cite{Garg2022ICDCN}. The problem of stable marriage has also been related to predicate detection approaches~\cite{Garg2017DISC}. $SP_1$, $SP_2$, and $ParSP_2$ shortest path algorithms are also derived using the LLP approach, with the concept of fixedness being used to compute the distance value earlier than classical approaches~\cite{Alves.2020,Garg.2020}. Similarly, the minimum spanning tree problem is also modeled using the LLP approach for predicate detection~\cite{Alves2022MST}. Very recently, the concept of equilevel predicate detection is introduced, which is an extension of the lattice predicate approach with new results on efficient online parallel detection~\cite{Garg2024Equilevel}.

LLP-FW is an extension of the above approach, with the concept of LLP being implemented using a shared memory runtime environment, with the rules of forbiddenness and advancement being implemented in parallel, with the runtime environment providing the necessary tools for updating the states atomically.

\subsection{Specific Problem Formulations}

There are a number of problems studied within this work for which there is an existing algorithmic history, and this history is relevant to the LLP formulation.

The Single Source Shortest Path (SSSP) problem is arguably one of the most studied combinatorial optimization problems. The traditional sequential algorithms for this problem include Dijkstra's algorithm~\cite{Dijkstra1959}, the Bellman-Ford algorithm~\cite{Bellman1958}, and Johnson's algorithm~\cite{Johnson1977}. The parallel version of the SSSP problem has also been extensively studied; for instance, the Delta Stepping method~\cite{Meyer2003} partitions the graph into “distance buckets” to achieve parallel computation, although the parameters need to be set appropriately to balance sparsity and density. The LLP formulation of the SSSP problem~\cite{Garg2018} uses the coordinates of the lattice corresponding to the distances between vertices and the forbidden state advancement mechanism to facilitate the solution of the relaxation operation.

The Stable Marriage (SM) problem is traditionally solved using the Gale-Shapley algorithm, where men propose to women based on their preferences~\cite{Gale1962,Gusfield1989,Knuth1997}. The LLP formulation of the Stable Marriage Problem~\cite{Garg2018,Garg2017DISC} uses the preference index of each man as the coordinates of the solution vector in the lattice, effectively providing a parallel solution to the stable marriage problem while ensuring correctness within the framework of the lattice-linear paradigm.

The problem of scheduling jobs with prerequisites is equivalent to computing critical-path lengths in a Directed Acyclic Graph~\cite{Kelley1959}. This is naturally equivalent to the LLP formulation, where the forbidden state advancement mechanism is used to enhance the solution feasibility in a monotonic manner, similar to the graph and the stable marriage problems.

\subsection{Parallel Frameworks and Separation of Algorithm and Scheduler}

Substantial research has been conducted on abstractions for separating the expression of an algorithm from the expression of the execution schedule for graph and graph-like computations. Distributed vertex-centric frameworks, such as Pregel~\cite{Malewicz2010}, partition the computation into supersteps of message passing and have been successful for graph analytics applications. On the other hand, the GraphLab~\cite{Low2010} approach focuses on asynchronous graph-parallel computation with explicit consistency models for sparse dependency structures. The Galois and Ligra~\cite{Pingali2011, Shun2013} approaches emphasize efficient graph processing using conflict-aware scheduling and frontier-based approaches, respectively. More recent research, such as the GraphIt approach~\cite{Zhang2018}, has demonstrated the value of separating algorithm expression from a scheduling language for exploring new approaches to locality, parallelism, and traversal for efficient graph analytics.

The key difference is that the approach of LLP-FW is predicate-driven, focusing on forbidden states and advancement rules, and is lock-free, focusing on shared memory and moving beyond graph-based computations, such as the stable marriage and knapsack problems, rather than committing to a particular vertex-centric graph computation approach.

\subsection{Lock-Free Algorithm and Scheduling}

Lock-free and wait-free algorithms~\cite{Herlihy1991,Herlihy2008} leverage atomicity to achieve progress properties under high contention. In the context of task parallelism, one of the most popular scheduling methodologies for efficiently managing irregular parallelism is the use of work stealing. Some of the foundational results in this space have established the efficacy of the scheduling approach for parallel environments~\cite{Blumofe1999}. Parallel graph algorithms heavily rely on lock-free approaches for updating distance, frontier, and visitation. However, they also involve phased barriers and synchronization during the execution process.

The proposed work extends the existing systems insights, including private work, work stealing for parallel graph algorithms, and investigates problem-aware scheduling methodologies, including recency bias for graph relaxation, chunked FIFO for cascading matching, and bucket scheduling for structured dynamic programming state spaces. Moreover, the proposed work also demonstrates that the optimal scheduling approach depends on the frontier shapes, along with their locality properties.

\section{Background}

\label{sec:cm-llp-bg}

In this section we introduce the theoretical background to understand the \emph{LLP-FW} framework. For a more in-depth discussion of the theoretical underpinnings on these topics, readers are directed to the works of Chase and Garg~\cite{chase1998detection} and Garg~\cite{Garg.2020}.

\paragraph{Lattices and Global States.}

We consider a lattice denoted by \(\CC\) defined by all \(n\)-dimensional vectors of nonnegative real numbers, with an upper bound defined by the vector \(T\). Every element of this lattice, denoted by \(\C \in \CC\), is an \(n\)-dimensional vector defined by:

\[
\C = ( \C[1], \C[2], \ldots, \C[n] ),
\]
such that \( 0 \leq \C[i] \leq T[i] \) for all \(i\). The partial order on this lattice is defined component-wise, i.e.,:
\[
\C \leq \C' \quad \text{iff} \quad \C[i] \leq \C'[i] \quad \forall i.
\]

This model is appropriate to describe a notion of a \emph{search space} of solutions, applicable to a variety of combinatorial optimization problems. We consider a scenario with \(n\) processes, where each process \(i\) has a single dimension denoted by \(\C[i]\). We call this a \emph{global state} denoted by \(\C\), while the local state of process \(i\) is denoted by \(\C[i]\).

\paragraph{Example Poset and Lattice.}

Figure~\ref{fig:poset-lattice} illustrates a simple finite partial order (poset) scenario when \(n = 2\). The associated distributive lattice comprises all combinations of local states that satisfy this partial order. In this particular scenario, the number of global states is eleven.

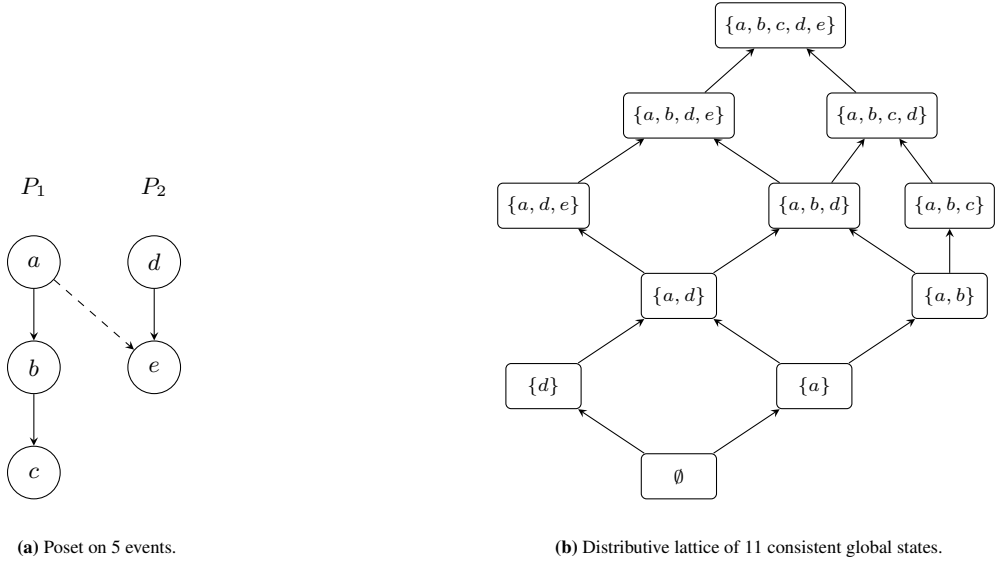
\begin{figure}[htbp]
\centering
\begin{minipage}[b]{0.3\textwidth}
\centering
\begin{tikzpicture}[>=stealth,
  evt/.style={circle, draw, minimum size=0.7cm, font=\small}]
  % Process 1 (left column)
  \node at (-0.8,4.2) {\footnotesize $P_1$};
  \node[evt] (a) at (-0.8,3.2) {$a$};
  \node[evt] (b) at (-0.8,1.8) {$b$};
  \node[evt] (c) at (-0.8,0.4) {$c$};
  \draw[->] (a) -- (b);
  \draw[->] (b) -- (c);
  % Process 2 (right column)
  \node at (0.8,4.2) {\footnotesize $P_2$};
  \node[evt] (d) at (0.8,3.2) {$d$};
  \node[evt] (e) at (0.8,1.8) {$e$};
  \draw[->] (d) -- (e);
  % Cross-dependency
  \draw[->, dashed] (a) -- (e);
\end{tikzpicture}
\vspace{0.3cm}
\subcaption{Poset on 5 events.}
\end{minipage}%
\hfill
\begin{minipage}[b]{0.65\textwidth}
\centering
\begin{tikzpicture}[>=stealth,
  box/.style={draw, rectangle, rounded corners=2pt, minimum width=1.0cm, minimum height=0.6cm, font=\scriptsize, align=center}]
  \node[box] (00) at (0,0)      {$\emptyset$};
  \node[box] (01) at (-1.8,1.2) {$\{d\}$};
  \node[box] (10) at (1.8,1.2)  {$\{a\}$};
  \node[box] (11) at (0,2.4)    {$\{a,d\}$};
  \node[box] (20) at (3.6,2.4)  {$\{a,b\}$};
  \node[box] (12) at (-1.8,3.6) {$\{a,d,e\}$};
  \node[box] (21) at (1.8,3.6)  {$\{a,b,d\}$};
  \node[box] (30) at (3.6,3.6)  {$\{a,b,c\}$};
  \node[box] (22) at (0,4.8)    {$\{a,b,d,e\}$};
  \node[box] (31) at (2.7,4.8)  {$\{a,b,c,d\}$};
  \node[box] (32) at (1.35,6.0) {$\{a,b,c,d,e\}$};
  % Edges (Hasse diagram)
  \draw[->] (00) -- (01);
  \draw[->] (00) -- (10);
  \draw[->] (01) -- (11);
  \draw[->] (10) -- (11);
  \draw[->] (10) -- (20);
  \draw[->] (11) -- (12);
  \draw[->] (11) -- (21);
  \draw[->] (20) -- (21);
  \draw[->] (20) -- (30);
  \draw[->] (12) -- (22);
  \draw[->] (21) -- (22);
  \draw[->] (21) -- (31);
  \draw[->] (30) -- (31);
  \draw[->] (22) -- (32);
  \draw[->] (31) -- (32);
\end{tikzpicture}
\vspace{0.3cm}
\subcaption{Distributive lattice of 11 consistent global states.}
\end{minipage}
\caption{(a)~A computation poset for $n=2$ processes. $P_1$ has events $a, b, c$ and $P_2$ has events $d, e$. The dashed arrow indicates a cross-process dependency: $a$ must complete before $e$. (b)~The corresponding distributive lattice. Each node shows the set of completed events. The state $\{d,e\}$ is absent because $e$ depends on $a$. Arrows indicate single-event transitions.}
\label{fig:poset-lattice}
\end{figure}

% Consolidate SSSP content under the SSSP section above
% (hide duplicate narrative/table/figure further down)
\paragraph{Combinatorial Optimization and Predicate Detection.}
Given this lattice, a combinatorial optimization problem amounts to finding the \emph{minimum} element of \(\CC\) that satisfies a given Boolean predicate \(B\). For instance, in a shortest-path problem, \(\CC\) would capture all candidate distance vectors and \(B\) would encode whether these distances are correct. The goal is then:
\[
\min_{\C \in \CC} \bigl\{ \C \,\big\vert\, B(\C) \text{ is true} \bigr\}.
\]
Finding whether there exists a \(\C \in \CC\) that satisfies \(B\) is known as the \emph{predicate detection} problem. In the problems we examine here, we want not just any feasible \(\C\), but the minimal one.

\paragraph{Forbidden States and Lattice-Linearity.}
The central concept in this framework is the notion of a \emph{forbidden} state, introduced by Chase and Garg~\cite{chase1998detection}. A local state $G[j]$ is forbidden if keeping $G[j]$ at its current value means that the predicate $B$ can never become true, regardless of what happens to the other coordinates. Formally:

\begin{definition}[Forbidden State~\cite{chase1998detection}]\label{def:forbidden}
 Let $\CC$ be a distributive lattice of $n$-dimensional vectors of $\RR_{\ge 0}$, and $B$ a boolean predicate on $\CC$. For any global state $G \in \CC$, the index $j$ (or the local state $G[j]$) is \emph{forbidden} if:
 \[
  \forbidden(G,j,B) \;\equiv\; \forall H \in \CC : G \leq H : (G[j] = H[j]) \Rightarrow \neg B(H).
 \]
\end{definition}

\noindent
Thus, if $G$ violates $B$, and $j$ is forbidden in $G$, we know we must \emph{advance} $G[j]$ in order to approach a feasible solution. A predicate $B$ is called \emph{lattice-linear} if any global state not satisfying $B$ must contain at least one forbidden index:

\begin{definition}[Lattice-Linear Predicate~\cite{chase1998detection}]
 A boolean predicate $B$ is \emph{lattice-linear} with respect to $\CC$ if:
 \[
  \forall G \in \CC: \neg B(G) \;\Rightarrow\; \exists j : \forbidden(G,j,B).
 \]
\end{definition}

The important thing to note is that whenever $B(G)$ is false, we can always find at least one index $j$ to advance locally, which is what makes this approach practical for parallel computation.

\paragraph{Advancing Forbidden States.}
In practice, we also need to specify \emph{how much} to advance a forbidden coordinate. This leads to a refinement of the definition of forbiddenness:

\begin{definition}[$\alpha$-forbidden]
 Let $B$ be a boolean predicate on $\CC$. A coordinate $j$ of $\C$ is \emph{$\alpha$-forbidden} if
 \[
  \forall H \in \CC : H \geq \C : (H[j] < \alpha) \;\Rightarrow\; \neg B(H),
 \]
 where $\alpha > \C[j]$ is a problem-dependent increment usually encoded in the `"advance"` function''. 
\end{definition}

What this means in practice is that if $\C[j]$ is $\alpha$-forbidden, then any solution where $\C[j]$ stays below $\alpha$ is guaranteed to violate $B$, so we must increase $\C[j]$ at least to $\alpha$.

\paragraph{Algorithm \textsc{LLP}.}
Putting this all together, Algorithm~\ref{fig:alg-llp} (adapted from~\cite{chase1998detection,Garg.2020}) shows how to find the least vector at most $T$ that satisfies a lattice-linear predicate $B$:

\begin{algorithm}[H]
\SetAlgoRefName{LLP}
\caption{To find the minimum vector at most $T$ that satisfies $B$.}
\label{fig:alg-llp}
\DontPrintSemicolon
\KwIn{$T$: vector, $B$: lattice-linear predicate}
\KwOut{vector $\C$, or \texttt{null} if no feasible solution exists}

\Begin{
 Initialize $\C$ such that $\C[i] = 0$ for all $i$.\;
 \While{ $\exists j: \forbidden(\C,j,B)$ }{
  \ForEach{$j$ s.t. $\forbidden(\C,j,B)$}{
   \If{$\alpha(\C,j,B) > T[j]$}{
    \Return{\texttt{null}}
   }
   \Else{
    $\C[j] \gets \alpha(\C,j,B)$
   }
  }
 }
 \Return{$\C$}\;
}
\end{algorithm}

\noindent

The algorithm works by finding all the coordinates $j$ that are currently forbidden and moving them forward, simultaneously. It keeps doing this until no coordinate is forbidden, i.e., until $B(\C)$ is satisfied. If at any point the required advance would exceed the bound $T[j]$, then no solution exists within the given bounds.

The reason the algorithm terminates is that each coordinate only moves forward (monotonically) and there are finitely many possible values. Since $B$ is lattice-linear, whenever $B(G)$ is false there is at least one coordinate to advance, so the algorithm makes progress in every iteration. Lemma~\ref{lem:basic-LLP}, adapted from Garg~\cite{Garg.2020} and Chase and Garg~\cite{chase1998detection}, shows when multiple lattice-linear predicates can be combined.

\begin{lemma}[Basic Lattice-Linearity~\cite{Garg.2020,chase1998detection}]
\label{lem:basic-LLP}
Let $B$ be any boolean predicate on a lattice $\CC$ of vectors. 
\begin{enumerate}
\item[(a)] Suppose $f \colon \CC \to \RR_{\ge 0}$ is a monotone function, and consider a predicate $B \equiv \{\C \mid \C[i] \ge f(\C)\}$ for a fixed $i$. Then $B$ is lattice-linear.
\item[(b)] If $B_1$ and $B_2$ are each lattice-linear, then $B_1 \wedge B_2$ is also lattice-linear.
\end{enumerate}
\end{lemma}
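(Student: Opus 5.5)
Both parts follow directly from Definition~\ref{def:forbidden} and the definition of lattice-linearity; no earlier result is needed beyond unfolding these.

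\textbf{Part (a).} Fix $\C \in \CC$ with $\neg B(\C)$, that is, $\C[i] < f(\C)$. I will show that the index $i$ itself is forbidden in $\C$.
\begin{itemize}
\item Take any $H \in \CC$ with $\C \le H$ and $H[i] = \C[i]$.
\item Monotonicity of $f$ gives $f(H) \ge f(\C)$.
\item Hence $H[i] = \C[i] < f(\C) \le f(H)$, so $\neg B(H)$.
\end{itemize}
This is exactly $\forbidden(\C,i,B)$, so $B$ is lattice-linear. The same chain of inequalities shows more: $i$ is $\alpha$-forbidden with $\alpha = f(\C)$. Indeed, any $H \ge \C$ with $H[i] < f(\C)$ satisfies $H[i] < f(H)$. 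This justifies the advance rule $\C[i] \gets f(\C)$, although it is not needed for the lemma.

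\textbf{Part (b).} Let $\C$ satisfy $\neg(B_1 \wedge B_2)(\C)$. Then $\neg B_1(\C)$ or $\neg B_2(\C)$; without loss of generality assume $\neg B_1(\C)$.
\begin{itemize}
\item Lattice-linearity of $B_1$ gives an index $j$ with $\forbidden(\C,j,B_1)$.
\item So every $H \ge \C$ with $H[j] = \C[j]$ satisfies $\neg B_1(H)$.
\item Therefore every such $H$ also satisfies $\neg(B_1 \wedge B_2)(H)$.
\end{itemize}
Hence $\forbidden(\C,j,B_1 \wedge B_2)$ holds. The key fact is that forbiddenness is inherited by any stronger predicate: if $B' \Rightarrow B$ pointwise, then $\forbidden(\C,j,B)$ implies $\forbidden(\C,j,B')$.

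Neither part has a real obstacle. The only point needing care is the direction of monotonicity in (a). It must be nondecreasing with respect to the componentwise order, so that moving up the lattice can only raise the threshold $f$ and never rescue coordinate $i$. I will state this assumption explicitly. I will also note that $\CC$ being a distributive lattice plays no role here beyond supplying the order.
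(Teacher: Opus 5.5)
Your proof is correct and matches the standard argument. The paper states this lemma without proof and defers to Garg and to Chase and Garg, where (a) is proved the same way, by showing that index $i$ itself is forbidden via $H[i] = G[i] < f(G) \le f(H)$, and (b) by noting that an index forbidden for $B_1$ stays forbidden for the stronger predicate $B_1 \wedge B_2$. Your extra observation that $i$ is in fact $f(G)$-forbidden is also correct, and it is what justifies the advance step $G[i] \gets f(G)$ in the \textsc{LLP} algorithm.
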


\noindent
This is important because it means that conditions from different aspects of a problem (e.g., distance constraints in shortest paths, matching constraints in stable marriage) can be combined into a single lattice-linear predicate that the \textsc{LLP} algorithm can then solve.

While the theory above gives us the foundations, there are several practical challenges that are not directly addressed in the original theoretical presentation. In our work we focus on:
\begin{itemize}
  \item selecting which states to test for forbiddenness in very large lattices;
  \item predicting which states become newly forbidden after an advance;
  \item prioritizing candidate updates that prune the search space most effectively; and
  \item ensuring that concurrent updates preserve lattice monotonicity and convergence.
\end{itemize}

The next section describes how we address these challenges in \emph{LLP-FW}.

\section{LLP framework}

\label{sec:cm-llp-fw}

In this section we present LLP-FW, our shared memory runtime for solving combinatorial problems that can be expressed as lattice linear predicates. The runtime is built around three operations: initialization, forbidden checks, and advancements. These operations are executed in parallel using lock-free solvers that iterate until a solution is found. We describe the implementation model, discuss the concurrency and scheduling issues that arise, and compare different solver variants in terms of their complexity and performance characteristics.

\subsection{Illustrative Example}

\label{sec:cm-llp-example}

To exemplify the benefits and difficulties of using the LLP formulation, consider the simple weighted graph in Figure~\ref{fig:example-graph}. The labels of the edges represent the different relaxations in the lattice transitions.

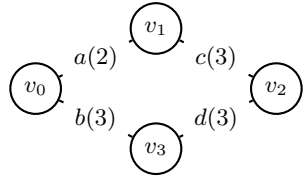
\begin{figure}[h!]
  \centering
  \begin{tikzpicture}[scale=0.8]
    % Nodes
    \begin{scope}[every node/.style={circle, thick, draw, font=\small}]
      \node (A) at (0,2) {$v_0$};
      \node (B) at (2,1) {$v_3$};
      \node (C) at (4,2) {$v_2$};
      \node (D) at (2,3) {$v_1$};
    \end{scope}
    % Edges
    \begin{scope}[>={Stealth[black]},
           every node/.style={fill=white, circle, font=\small},
           every edge/.style={draw=black, thick}]
      \path [-] (A) edge node {$b(3)$} (B);
      \path [-] (A) edge node {$a(2)$} (D);
      \path [-] (B) edge node {$d(3)$} (C);
      \path [-] (C) edge node {$c(3)$} (D);
    \end{scope}
  \end{tikzpicture}
  \caption{An undirected graph with 4 nodes.}
  \label{fig:example-graph}
\end{figure}

\smallskip

In this example, we solve Single Source Shortest Path (SSSP) from node 0. Each node \(i\) stores a distance estimate \(\C[i]\). The initial state is \(\C = [0, \infty, \infty, \infty]\), and each transition corresponds to an edge relaxation. Figure~\ref{fig:example-lattice} shows the resulting lattice of global states.

\begin{figure}[htb]
	\centering
	\begin{tikzpicture}[scale=0.58]
		\begin{scope} [
				auto,
				every node/.style={minimum size=0.9cm, text centered, font=\tiny},
				>=latex
			]
			\node (C0) at (0,0) {\( \C_0 = \begin{bmatrix} 0 \\ \infty \\ \infty \\ \infty \end{bmatrix} \)};
 
			\node (C1) at (-2,3) {\( \C_1 = \begin{bmatrix} 0 \\ 2 \\ \infty \\ \infty \end{bmatrix} \)};
			\node (C2) at (2,3) {\( \C_2 = \begin{bmatrix} 0 \\ \infty \\ \infty \\ 3 \end{bmatrix} \)};
 
			\node (C3) at (-4,6) {\( \C_3 = \begin{bmatrix} 0 \\ 2 \\ 5 \\ \infty \end{bmatrix} \)}; 
			\node (C4) at (-0.5,6) {\( \C_4 = \begin{bmatrix} 0 \\ 2 \\ \infty \\ 3 \end{bmatrix} \)};
			\node (C6) at (4,6) {\( \C_6 = \begin{bmatrix} 0 \\ \infty \\ 6 \\ 3 \end{bmatrix} \)};
 
			\node (C7) at (-6,9) {\( \C_7 = \begin{bmatrix} 0 \\ 2 \\ 5 \\ 8 \end{bmatrix} \)};
			\node (C8) at (6,9) {\( \C_8 = \begin{bmatrix} 0 \\ 9 \\ 6 \\ 3 \end{bmatrix} \)};

      \node (C9) at (2,12) {\( \C_9 = \begin{bmatrix} 0 \\ 2 \\ 6 \\ 3 \end{bmatrix} \)};
 
			\node (CF) at (0,15) {\( \C_F = \begin{bmatrix} 0 \\ 2 \\ 5 \\ 3 \end{bmatrix} \)};
		\end{scope}
 
		\begin{scope}[>={Stealth[black]},
				every node/.style={fill=white, font=\tiny},
			  every edge/.style={draw, ->}]
			\path (C0) edge node {$a$} (C1);
			\path (C0) edge node {$b$} (C2);
 
			\path (C1) edge node {$c$} (C3);
			\path (C1) edge node {$b$} (C4);
 
			\path (C2) edge node {$a$} (C4);
			\path (C2) edge node {$d$} (C6);
 
			\path (C3) edge node {$b$} (CF);       
			\path (C3) edge node {$d$} (C7);

        \path (C4) edge node {$c$} (CF);
        \path (C4) edge node {$d$} (C9);

        \path (C6) edge node {$a$} (C9);
        \path (C6) edge node {$c$} (C8);
 
			\path (C7) edge node {$b$} (CF);
			\path (C9) edge node {$c$} (CF);
        \path (C8) edge node {$a$} (C9);
		\end{scope}
	\end{tikzpicture}
	\caption{Lattice of global states with all state transitions through all possible edge relaxations for the example graph.}
  \label{fig:example-lattice}
\end{figure}
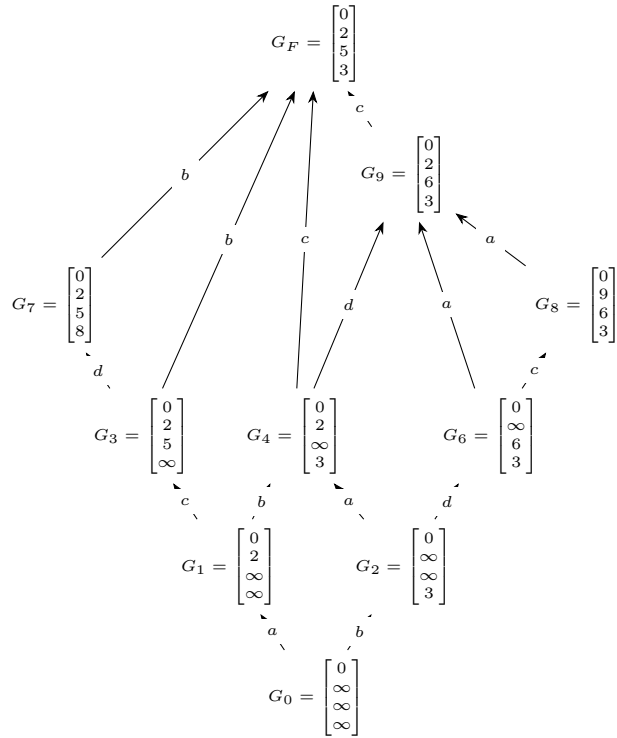

This small example already has a number of valid relaxation paths leading to the same final state. This number grows extremely fast with the size of the graph, so it is impossible to check all of these paths. The length of the paths is also important, e.g., \( \C_0 \to \C_1 \to \C_3 \to \C_F \) has three steps, while \( \C_0 \to \C_2 \to \C_6 \to \C_8 \to \C_9 \to \C_F \) has five. This means more work has to be done, even if correctness is preserved.

However, it also shows that parallelism is possible. At state \( \C_0 \), \( \C[1] \) and \( \C[3] \) are forbidden, so they can be relaxed simultaneously. This means that regardless of which finishes first, the resulting state is within the lattice, so this does not affect correctness. This duality of correctness and schedule quality is a key factor that makes it possible to apply the LLP-FW parallelization to different problem domains.

\subsubsection{Splitting Work Between Threads} In a way, splitting work between threads in the LLP is a problem that, despite being quite different, has a number of similarities with splitting work in any other parallel processing system. One naive way to solve this would be to give each thread exclusive write access to a subset of elements, while at the same time giving it read access to all elements it might need to update. This would require the use of locks, etc., to ensure correct program behavior.

Nevertheless, such a strong dependence on locks may result in a significant degradation of the program’s performance. Locks impose a serial constraint on accessing the shared resource, and threads will be forced to wait for each other, thus defeating the purpose of using parallelism. The contention among threads to access the lock may result in a bottleneck, latency, and underutilization of CPU cores.

To avoid such problems, we will use an optimistic approach to handling concurrency issues. Here, the threads will proceed with updating the shared variables without acquiring any lock on those variables. This approach accepts the possibility of having many threads performing redundant updates on the same element, as the cost of such redundant updates may be less compared to the benefits of using an optimistic approach.

In our problem, this means that the threads will proceed with updating the elements of the solution vector independently. Atomic variables will be used to ensure the safety of updating the shared variables without acquiring any lock on those variables. If a thread is unable to update the element of the solution vector due to the updates performed by other threads, it will simply retry the operation.

Furthermore, certain problems benefit from prioritizing the processing of specific elements. For instance, in the Single Source Shortest Path (SSSP) problem, nodes with lower distance estimates are more likely to become fixed—that is, reach their final shortest path distances—and require no further updates. By processing these nodes first, we can reduce unnecessary work on nodes that might otherwise be updated multiple times. This prioritization aligns with the principle of exploiting problem-specific knowledge to optimize parallel processing.

Implementing a priority mechanism, however, introduces additional complexity. It may necessitate priority queues or other data structures that can handle concurrent access while maintaining element ordering based on priority and that can be in itself a bottleneck. Care must be taken to balance the benefits of priority-based processing with the overhead of maintaining such data structures in a parallel environment.

In summary, effective processing and partitioning of work among threads in the LLP framework involves:

\begin{enumerate}
  \item \textbf{Only looking at forbidden states that are likely to change:} For problems of meaningful size, looking at all the solution vector elements at each iteration is infeasible since the there are much less threads than there are elements.
  \item \textbf{Optimistic Concurrency:} Allowing threads to handle conflicts through atomic operations and retries.
  \item \textbf{Reducing cross-thread contention and communication:} Keeping data thread-local when possible and avoiding locks where possible to prevent serialization and performance degradation. 
  \item \textbf{Prioritized Processing:} Leveraging problem-specific characteristics to process high-priority elements first, thereby reducing redundant computations.
\end{enumerate}

Section~\ref{sec:cm-llp-practice} presents solver variants that address these trade-offs in different ways and perform well on different workloads. We first examine optimistic concurrent updates in more detail and highlight the main correctness pitfalls.

\subsection{Efficient State Selection in Practice}

While LLP theory describes how to identify and advance forbidden states, it does not prescribe how to select candidate states efficiently. At practical scales, this selection policy can dominate runtime. The state selection strategies discussed below are well-established in worklist-based algorithms (e.g., priority-driven relaxation in Dijkstra-style solvers, neighbour-push in BFS); we describe them here in terms of the LLP abstraction to clarify how they map onto the framework's solver variants.

\paragraph{The State Selection Problem}

Consider a simple chain graph with \(n\) nodes where node \(i\) is connected to node \(i+1\) with weight 1:

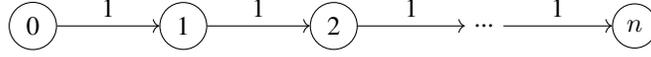
\begin{figure}[h!]
  \centering
  \begin{tikzpicture}[node distance=2cm]
    \node[circle,draw] (0) {0};
    \node[circle,draw] (1) [right of=0] {1};
    \node[circle,draw] (2) [right of=1] {2};
    \node (dots) [right of=2] {...};
    \node[circle,draw] (n) [right of=dots] {\(n\)};
     
    \draw[->] (0) -- node[above] {1} (1);
    \draw[->] (1) -- node[above] {1} (2);
    \draw[->] (2) -- node[above] {1} (dots);
    \draw[->] (dots) -- node[above] {1} (n);
  \end{tikzpicture}
  \caption{Chain graph demonstrating inefficient state selection}
  \label{fig:chain-graph}
\end{figure}

\begin{claim}[Complexity Impact of State Selection]
For the chain graph with \(n\) nodes, naive round-robin state selection leads to \(O(n^2)\) complexity, while informed state selection achieves \(O(n)\) complexity. More generally, for arbitrary graphs, naive state selection can lead to an exponential number of steps~\cite{Garg.2020}, making informed selection essential for practical performance.
\end{claim}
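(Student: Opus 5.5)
We make the two selection policies precise for the SSSP lattice of Section~\ref{sec:cm-llp-example} and count forbidden-checks and advances. State $\C[0]=0$, $\C[i]=\infty$ for $i\ge 1$. For $i \ge 1$, index $i$ is forbidden iff $\C[i] < \C[i-1]+1$. (Here the lattice is ordered by the LLP orientation in which relaxation is an advance, as in Figure~\ref{fig:example-lattice}.) Advancing sets $\C[i] \gets \C[i-1]+1$. Node $i$ has a single predecessor, so every successful advance fixes $\C[i]$ at its final value $i$, provided $\C[i-1]$ was already final. An advance made while $\C[i-1]$ is still $\infty$ is vacuous, since $\infty + 1 = \infty$.

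\emph{Naive round-robin.} We take this to mean that each sweep tests every index $1,\dots,n$ for forbiddenness, and sweeps repeat until no index is forbidden. The key step is the worst-case ordering: the sweep visits indices in order $n, n-1, \dots, 1$, or equivalently the synchronous variant of Algorithm~\ref{fig:alg-llp}, where all forbidden indices read the state at the start of the round. We prove by induction on $k$ that after sweep $k$ exactly nodes $0,\dots,k$ hold final values. In sweep $k$, only index $k$ has a final predecessor, so only it advances usefully. Hence $n$ sweeps are needed, each costing $\Theta(n)$ checks, giving $\Theta(n^2)$ total work and thus $O(n^2)$. The upper bound $O(n^2)$ holds for any sweep order, because each sweep finalizes at least the lowest non-final index. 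I will also note that with the favorable order $1,\dots,n$ a single sweep suffices. The claim is therefore about round-robin that is oblivious to dependencies; I will state this adversarial or synchronous assumption explicitly, since it is the main subtlety.

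\emph{Informed selection.} This is a worklist that checks only indices whose predecessor's value just changed, the neighbour-push rule of the next subsection. Initially the worklist is $\{1\}$. Advancing $i$ pushes $i+1$. Each index is inserted once, checked once, and advanced once, so the total is $n$ checks and $n$ advances, i.e.\ $O(n)$. Correctness follows from Lemma~\ref{lem:basic-LLP}: the predicate is a conjunction of monotone constraints $\C[i] \ge f_i(\C)$. An index can become forbidden only when a coordinate $f_i$ depends on changes, so the worklist never misses a forbidden index.

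\emph{General graphs.} For the ``more generally'' part, I will not reprove anything but cite the construction in~\cite{Garg.2020}. That construction is a family of graphs of the Johnson/Ford type, with geometrically decreasing weights on parallel shortcut edges, in which an adversarial choice among forbidden indices causes a node's estimate to be revised through exponentially many successively better values before it becomes final. The main obstacle is the precise definition of ``naive'' selection, since the lower bound depends on the order. I will handle this by fixing the synchronous or reverse-order semantics before doing the count.
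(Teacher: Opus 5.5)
Your argument is essentially the paper's: naive selection spends $n$ rounds of $n$ checks because only one new node becomes forbidden per round, informed selection spends $O(1)$ work per newly forbidden node, and the exponential case on general graphs is simply deferred to~\cite{Garg.2020}. Your handling of sweep order sharpens the paper's sketch, which tacitly assumes only node $i$ is forbidden at step $i$ even though a forward-order cyclic sweep finishes the chain in one productive pass. Your ``lowest non-final index is finalized every sweep'' argument gives $O(n^2)$ for any order, and the quadratic lower bound does need the synchronous or reverse-order semantics you fix. One presentational fix: write the forbidden test numerically as $\C[i] > \C[i-1]+1$, rather than relying on a reversed-order $<$.
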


For example: consider solving SSSP from node 0. With naive selection, the algorithm examines all \(n\) nodes at each step \(i\), even though only node \(i\) is actually forbidden. This gives \(n\) iterations times \(n\) examinations per iteration, for a total complexity of \(O(n^2)\). With informed selection, the algorithm examines only node \(i\)'s neighbors at each step. Since exactly one new node becomes forbidden per step, the total complexity is \(n\) iterations times \(O(1)\) examinations, or \(O(n)\) overall.

\paragraph{Problem-Specific State Selection}

Different LLP problems provide different opportunities for informed state selection. In graph problems (SSSP, BFS), when a node \(v\) is advanced, only the neighbors of \(v\) can become forbidden, so we only need to add these neighbors to the processing queue. In Stable Marriage, when a person \(p\) is matched, only the current matches of \(p\)'s preferred partners can become forbidden. In Job Scheduling, when a job \(j\) is advanced, only successor jobs whose prerequisites include \(j\) can become forbidden. In each case, the key observation is that the set of potentially affected indices is much smaller than the full solution vector, and the problem structure tells us exactly which indices to check next.

\subsubsection{Concurrency Challenges in the LLP Framework}

Partitioning work among threads in the LLP framework presents significant challenges, particularly regarding the correctness of concurrent updates to the global state. We now formally analyze why naive concurrent implementations can lead to incorrect results.

\paragraph{Analysis of Race Conditions}

Consider the graph from our earlier example, where two threads are attempting to update \( \C[3] \) concurrently:

%\begin{claim}[Incorrectness of Non-Atomic Updates]
%In a parallel LLP implementation without atomic updates, the algorithm may terminate with a non-final state even when all other vertices are fixed.
%\end{claim}

%\begin{e1}
%Consider the following execution sequence:
%\begin{enumerate}
%  \item Initial state: \( \C = [0, 2, 5, \infty] \)
%  \item Thread \( T_1 \) reads \( \C[3] = \infty \)
%  \item Thread \( T_2 \) reads \( \C[3] = \infty \)
%  \item \( T_1 \) computes distance via edge \( b \): \( d_1 = \C[0] + w_b = 0 + 3 = 3 \)
%  \item \( T_2 \) computes distance via edge \( d \): \( d_2 = \C[2] + w_d = 5 + 3 = 8 \)
%  \item \( T_1 \) writes \( \C[3] = 3 \)
%  \item \( T_2 \) overwrites with \( \C[3] = 8 \)
%\end{enumerate}

% The final state \( \C = [0, 2, 5, 8] \) is incorrect because:
% \begin{itemize}
%   \item The true shortest path to vertex 3 has length 3 (via edge \( b \))
%   \item All other vertices are non-forbidden, so no further updates will be triggered
%   \item The algorithm terminates with \( \C[3] = 8 \), which is not the final shortest path distance
% \end{itemize}
%%\end{e1}

\begin{claim}[Incorrectness of Non-Atomic Updates]
  In a parallel LLP implementation without atomic updates (or without proper synchronization),
  the algorithm may terminate with a non-final state even when all other vertices are fixed.
  \end{claim}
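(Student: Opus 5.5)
The claim is existential, so I would prove it with an explicit lost-update interleaving on the graph of Figure~\ref{fig:example-graph}. The key observation is that LLP progress relies on monotonicity: each coordinate may only move toward its final value (for SSSP, distances only decrease from $\infty$). A plain read--compute--write lets a stale write undo progress. The argument then has two parts: build the bad schedule, and show that no further update is ever triggered.

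\textbf{Step 1: the concurrent state.} Pick an intermediate state reachable in Figure~\ref{fig:example-lattice}, say $\C=[0,2,5,\infty]$, which is $\C_3$. From here two threads $T_1,T_2$ both consider relaxing $\C[3]$.
\begin{itemize}
\item $T_1$ uses edge $b$ from $v_0$ and reads $\C[3]=\infty$.
\item $T_2$ uses edge $d$ from $v_2$ and also reads $\C[3]=\infty$.
\item $T_1$ computes $0+3=3$ and $T_2$ computes $5+3=8$; each sees an improvement over the value it read.
\item $T_1$ writes $\C[3]=3$, then $T_2$ writes $\C[3]=8$ unconditionally.
\end{itemize}
The resulting state is $[0,2,5,8]=\C_7$, and the update from $T_1$ has been lost.

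\textbf{Step 2: the algorithm stops at the wrong state.} Under the informed state selection described above, a vertex is re-examined only when a neighbour advances. After $T_1$'s write, $v_3$'s neighbours were scheduled, but the stale write by $T_2$ triggers nothing that recomputes $\C[3]$ from $\C[0]$. I would check each coordinate of $[0,2,5,8]$ in turn.
\begin{itemize}
\item $\C[0]=0$ is the source and is fixed.
\item $\C[1]=2$ is fixed.
\item $\C[2]=5$ is already optimal and cannot be improved, since $8+3>5$.
\end{itemize}
So every vertex other than $v_3$ is fixed. Once the worklist drains, the algorithm terminates with $\C[3]=8$, while the true distance is $3$. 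This is a non-final state with all other vertices fixed, as the claim states.

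\textbf{Main obstacle.} The delicate point is termination. With full round-robin rechecking (the naive selection of the previous claim), $v_3$ would be found forbidden again and repaired. I would therefore state explicitly that the implementation uses neighbour-driven scheduling, as the runtime does. I would also point out that the first write already enqueued $v_3$'s neighbours and they were processed before the overwrite, so nothing remains queued to repair $\C[3]$.

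To close, I would contrast this with an atomic compare-and-swap min-update. Such an update refuses a write when the current value is already smaller, so the interleaving above is impossible. This shows that atomicity is exactly what fails in the counterexample.
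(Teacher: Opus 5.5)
Your proposal is correct and is essentially the paper's own argument. It uses the same starting state $\C_3=[0,2,5,\infty]$, the same concurrent relaxations of $v_3$ via edges $b$ and $d$, and the same lost update leaving $\C_7=[0,2,5,8]$; your explicit termination step also tightens the paper's sketch, which only says the algorithm ``sees no further forbidden elements'' without noting that $v_3$ is still forbidden via edge $b$ ($8>0+3$), so, as you observe, the counterexample needs neighbour-triggered worklist scheduling (and, as in the paper, per-edge relaxation rather than a min over all neighbours), since a full rescan would restore $\C[3]=3$.
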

   
  \begin{proof}[Sketch of Problematic Execution]
  Consider a shortest-path scenario, where $\C[v]$ tracks the current best-known distance to vertex~$v$.
  Suppose two threads, $T_1$ and $T_2$, each attempt to relax the same vertex $3$. 
  Initially, $\C = [\,0,\,2,\,5,\,\infty\,]$. (That is, $\C[0] = 0$, $\C[1] = 2$, $\C[2] = 5$, and $\C[3] = \infty$.)
   
  \begin{enumerate}
  \item Thread $T_1$ reads $\C[3] = \infty$ and computes a new distance $d_1 = \C[0] + w_b = 0 + 3 = 3$ via edge $b$.
  \item Thread $T_2$ reads $\C[3] = \infty$ and computes a new distance $d_2 = \C[2] + w_d = 5 + 3 = 8$ via edge $d$.
  \item $T_1$ writes $\C[3] = 3$.
  \item \textbf{Race Overwrite:} $T_2$ \emph{overwrites} $\C[3]$ to $8$ (using its stale read).
  \end{enumerate}
   
  At completion, the algorithm's final state becomes $\C = [\,0,\,2,\,5,\,8\,]$.
  Since the algorithm sees no further forbidden elements, it terminates—even though $\C[3]$ is incorrect. 
  The correct shortest path would have been $\C[3]=3$, but that better update was lost due to the non-atomic overwrite by $T_2$.
   
  This violates the global \emph{monotonicity} property: once we discover a strictly better distance (3 vs.\ $\infty$), we must not revert to a worse value (8). 
  Non-atomic writes allowed $T_2$ to use a stale read of $\C[3]=\infty$, effectively undoing $T_1$'s progress. 
  Hence, the algorithm can terminate in a \emph{non-final} state that fails to satisfy the intended shortest-path condition.
  \end{proof}

\paragraph{Generalization to Other LLP Problems}

The race condition pattern manifests differently in other LLP problems implemented in our framework:

\begin{itemize}
  \item \textbf{Breadth First Search (BFS):}
    In the LLP formulation, \(\C[v]\) tracks the BFS level (hop distance) from the source,
    and a vertex is forbidden if \(\C[v] > \min_{(u,v) \in E}(\C[u] + 1)\).
    Consider a graph where vertex~3 is reachable from the source in one hop (via vertex~0)
    and also in two hops (via vertices 0--2--3):
    \begin{itemize}
      \item \( T_1 \) relaxes vertex~3 via the direct edge from vertex~0: reads \(\C[0] = 0\), computes level~1
      \item \( T_2 \) relaxes vertex~3 via the two-hop path: reads \(\C[2] = 1\), computes level~2
      \item \( T_1 \) writes \(\C[3] = 1\)
      \item \( T_2 \) overwrites \(\C[3] = 2\), reverting the correct value
      \item The algorithm terminates with an incorrect BFS level for vertex~3
    \end{itemize}
   
  \item \textbf{Stable Marriage Problem:}
    In the LLP formulation, \(\C[m]\) tracks man~\(m\)'s proposal index into his preference list;
    the index advances monotonically.
    A man is forbidden if his current target woman prefers another man who also targets her.
    Consider a scenario where man~\(m_1\) is forbidden and two threads attempt to advance his proposal index concurrently:
    \begin{itemize}
      \item Both \( T_1 \) and \( T_2 \) read \(\C[m_1] = k\) (current proposal index)
      \item \( T_1 \) advances \(\C[m_1]\) from \(k\) to \(k+1\)
      \item \( T_2 \), still using its stale read, overwrites \(\C[m_1] = k+1\) with \(k+1\) (benign) or, in a more complex interleaving, a second advance by \(T_1\) to \(k+2\) is overwritten back to \(k+1\)
      \item The lost advance means \(m_1\) appears to target a woman he has already been rejected by, and the algorithm may terminate with a non-stable matching
    \end{itemize}
   
  \item \textbf{Job Scheduling:}
    In the LLP formulation, \(\C[j]\) tracks the earliest completion time of job~\(j\) in a DAG,
    advancing monotonically as predecessor completion times are discovered.
    A job is forbidden if its recorded completion time is less than the maximum of its predecessors' completion times plus its own duration:
    \begin{itemize}
      \item Two threads both read \(\C[j] = 5\) (current completion time estimate)
      \item \( T_1 \) discovers that predecessor \(p_1\) has finished at time~8, computes \(\C[j] = 8 + t_j = 12\)
      \item \( T_2 \) discovers that predecessor \(p_2\) has finished at time~10, computes \(\C[j] = 10 + t_j = 14\)
      \item \( T_2 \) writes \(\C[j] = 14\), then \( T_1 \) overwrites with \(\C[j] = 12\)
      \item The algorithm may terminate with a completion time that violates the precedence constraint from \(p_2\)
    \end{itemize}
\end{itemize}

\paragraph{Requirements for Correct Implementation}

To ensure correctness, parallel LLP implementations must satisfy:

\begin{enumerate}
  \item \textbf{Atomic Updates:} All modifications to \( \C[i] \) must be atomic, typically using compare-and-swap (CAS) operations:
  \[
  \text{CAS}(\C[i], \text{old}, \text{advance}(\text{old}))
  \]
  where \(\text{advance}\) is the problem-specific update function (e.g., \(\min(\text{old}, \text{new})\) for SSSP, index increment for Stable Marriage, \(\max\) over predecessors for Job Scheduling).
   
  \item \textbf{Monotonicity Preservation:} Updates must maintain the lattice property---values may only advance in the lattice order \(\sqsubseteq\), never regress:
  \[
  \forall i, t_2 > t_1: \C[i]_{t_1} \sqsubseteq \C[i]_{t_2}
  \]
  For SSSP, \(\sqsubseteq\) corresponds to \(\geq\) (distances decrease toward the optimum); for Stable Marriage, it corresponds to \(\leq\) on proposal indices (indices only increase); for Job Scheduling, it corresponds to \(\leq\) on completion times (times only increase).
   
  \item \textbf{Progress Detection:} The algorithm must be able to detect when no further improvements are possible:
  \[
  \not\exists i: \text{is\_forbidden}(\C, i)
  \]
\end{enumerate}

These requirements ensure that even with concurrent updates:
\begin{itemize}
  \item The algorithm will converge to the correct final state
  \item No valid updates are permanently lost
  \item The termination condition correctly identifies when the solution is complete
\end{itemize}

This concrete example highlights how the LLP framework leverages parallelism while also dealing with challenges such as the vast search space and the need for atomic operations to ensure correctness. It sets the stage for our discussion on practical implementations that address these challenges.

\subsection{From Theory to Practice}
\label{sec:cm-llp-practice}

The LLP model gives the theoretical foundation; this subsection shows how we map it to reusable systems code. We begin with the shared global state abstraction:

\begin{figure}[htb]
\begin{minipage}{\linewidth}
\begin{lstlisting}[language=rust, caption=Global state definition, basicstyle=\ttfamily\scriptsize]
pub struct GlobalState<T: Atomize> {
  pub solution_vector: Vec<T>,
  // ... additional fields ...
}
\end{lstlisting}
\end{minipage}
\end{figure}

This structure is generic over the type of the solution vector elements, which must implement the \texttt{Atomize} trait, meaning that it can both be read and written atomically and can be shared across threads.

Next, we define a core trait that encapsulates the essential operations needed for any LLP problem:

\begin{figure}[htb]
\begin{minipage}{\linewidth}
\begin{lstlisting}[language=rust, caption=Core trait definition for LLP problems, label={lst:core-llp-trait}, basicstyle=\ttfamily\scriptsize]
pub trait LatticeLinearProblem<T, S = T>: Send + Sync
where
  T: Atomize,
{
  /// Initializes and returns the global state
  fn init_global_state(&self) -> GlobalState<T>;

  /// Pushes initial states to process to the worklist
  fn initial_states_to_process<W: Worklist>(
    &self, 
    global_state: &GlobalState<T>, 
    worklist: &W
  );

  /// Returns whether an element is forbidden
  fn is_forbidden(
    &self, 
    global_state: &GlobalState<T>, 
    idx: usize
  ) -> bool;

  /// Advances a forbidden state
  fn advance<W: Worklist>(
    &self,
    global_state: &GlobalState<T>,
    idx: usize,
    worklist: &W,
  ) -> bool;

  /// Tests whether 'index' is forbidden and advances it if so
  /// A shorthand for calling is_forbidden and advance in sequence.
  fn ensure<W: Worklist>(
    &self,
    global_state: &GlobalState<T>,
    index: usize,
    worklist: &W,
  ) -> bool;

  /// Produces the final solution from the global state
  fn final_solution(&self, global_state: GlobalState<T>) -> Vec<S>;

  // ... additional helper methods and default implementations ...
}
\end{lstlisting}
\end{minipage}
\end{figure}

This trait defines the core operations that map directly to the theoretical concepts introduced earlier. The \texttt{init\_global\_state} method initializes the solution vector \(\C\) to its starting state, and \texttt{initial\_states\_to\_process} populates the worklist with the initial set of indices that need to be checked for forbiddenness. The \texttt{is\_forbidden} method implements the forbiddenness check \(\forbidden(\C,j,B)\) from Definition~\ref{def:forbidden}, while \texttt{advance} updates a forbidden state and returns whether further processing is needed---the worklist parameter allows the implementation to suggest which states should be processed next. The \texttt{ensure} method is a convenience that combines forbiddenness checking and advancement in a single call, with a default implementation that can be optimized for specific problems. Finally, \texttt{final\_solution} extracts the result from the global state (e.g.\ converting a vector of \texttt{AtomicU32} into a vector of \texttt{u32}).

\subsubsection{Single-Threaded Solvers}

We start with single-threaded solvers to illustrate the runtime interface before introducing concurrency. The simplest implementation is:

\begin{figure}[htb]
\begin{minipage}{\linewidth}
\begin{lstlisting}[language=rust,caption=A simple cyclic solver, basicstyle=\ttfamily\scriptsize]
pub fn lattice_linear_solver_cyclic_single_thread<T, S>(
  problem: &impl LatticeLinearProblem<T, S>,
) -> Vec<S>
where T: Atomize, S: Atomize {
  let global_state = problem.init_global_state();
  let mut worklist = NullWorklist {};

  loop {
    let mut found_forbidden = false;
    for idx in 0..global_state.solution_vector.len() {
      match problem.ensure(&global_state, idx, &mut worklist) {
        true => found_forbidden = true,
        false => (),
      }
    }
    if !found_forbidden {
      break;
    }
  }
  problem.final_solution(global_state)
}
\end{lstlisting}
\end{minipage}
\end{figure}

This solver implements the simplest, but also most inefficient, approach to finding a solution: repeatedly scan through all indices until no forbidden states are found. It uses the \texttt{ensure} method which combines forbiddenness checking and advancement in a single call, and it employs a \texttt{NullWorklist} since we are scanning through all indices anyway and do not need the advance step to suggest what to process next. The solver terminates only when a complete pass finds no forbidden states. Despite its simplicity, this implementation remains correct under LLP assumptions: monotone advances eventually reach a fixed point, and that fixed point corresponds to the intended solution when one exists.

While the cyclic solver is correct, it wastes effort checking states that are unlikely to be forbidden. A more efficient approach is to maintain a collection of states that might have become forbidden due to recent updates:

\begin{figure}[htb]
\begin{minipage}{\linewidth}
\begin{lstlisting}[language=rust,caption=A simple bag-based solver, basicstyle=\ttfamily\scriptsize]
pub fn lattice_linear_solver_bag_single_thread<T, S>(
  problem: &impl LatticeLinearProblem<T, S>,
) -> Vec<S>
where T: Atomize, S: Atomize {
  let global_state = problem.init_global_state();
  let bag = UnsafeCell::new(Vec::new());
   
  // Initialize with states that might be forbidden
  problem.initial_states_to_process(&global_state, &bag);

  // While there are states in the bag, process them.
  while let Some((idx, prio)) = bag.pop() {
    problem.ensure(&global_state, idx, &bag);
  }
  problem.final_solution(global_state)
}
\end{lstlisting}
\end{minipage}
\end{figure}

This implementation improves upon the cyclic solver by maintaining a bag of states that might be forbidden, rather than scanning all states on every pass. The \texttt{advance} method can add potentially affected states to the bag when it updates a state, and processing continues until the bag is empty.

The bag-based approach exploits the locality principle that is common to many LLP problems: when a state is advanced, only certain other states might become forbidden as a result. In SSSP, only the neighbors of an updated vertex need checking; in Stable Marriage, only the partners affected by a new match; in Job Scheduling, only successor jobs whose prerequisites include the settled job. By tracking only these potentially affected states, the solver avoids the wasted work of the cyclic approach while still remaining single-threaded. This implementation provides the foundation for the parallel versions that follow, since the worklist abstraction that drives it is the same one used by the multi-threaded solvers.

\subsubsection{Multi-Threaded Solvers}

We now present multi-threaded solvers that take full advantage of the parallel potential of the LLP framework.

\paragraph{All-Indices Parallel Solver:}
Rather than maintaining a dynamic worklist, this solver scans \emph{all} indices in parallel, repeatedly, until no forbidden index can be advanced. Although simpler, this approach can be quite effective if nearly every index frequently becomes forbidden, or if the overhead of using a dynamic work queue might overshadow its benefits.

\begin{figure}[htb]
\begin{minipage}{\linewidth}
 \begin{lstlisting}[language=rust,caption=Multi-threaded solver that scans all indices on every iteration,basicstyle=\ttfamily\scriptsize]
pub fn lattice_linear_solver_all_indices_parallel<T, S>(
  problem: &impl LatticeLinearProblem<T, S>,
  num_threads: usize,
) -> Vec<S>
where
  T: Atomize,
  S: Debug,
{
  // Build the global state (shared among threads)
  let global_state = problem.init_global_state();

  // Create a thread pool with exactly num_threads
  let pool = rayon::ThreadPoolBuilder::new()
    .num_threads(num_threads)
    .build()
    .expect("Failed building thread pool");

  // We repeat scans until no changes are made in two consecutive passes
  let mut loops_with_none_changed = 0;

  loop {
    let changed_any = AtomicBool::new(false);

    // In parallel, iterate over all indices
    pool.install(|| {
      (0..problem.size()).into_par_iter().for_each(|idx| {
        // If idx is forbidden, attempt to advance it
        if problem.is_forbidden(&global_state, idx) {
          let updated = problem.advance(&global_state, idx, &crate::NullWorklist {});
          if updated {
            changed_any.store(true, Ordering::Relaxed);
          }
        }
      });
    });

    // If no element was updated, we do a second check; if it still doesn't update,
    // then the algorithm terminates
    if !changed_any.load(Ordering::Relaxed) {
      loops_with_none_changed += 1;
      if loops_with_none_changed > 1 {
        break;
      }
    } else {
      loops_with_none_changed = 0;
    }
  }

  problem.final_solution(global_state)
}
\end{lstlisting}
\end{minipage}
\end{figure}

Compared to the other multi-threaded solvers in this framework, the all-indices-parallel approach offers \emph{simplicity}, since it requires no worklist and no logic to select which indices to process next; it simply scans them all. This can, however, introduce \emph{potential overhead} in large problems, because many indices may already be fixed at any given time, resulting in unproductive re-checks. Nevertheless, it can be advantageous when the problem structure causes many (or all) indices to frequently become forbidden, or when the solution vector itself is relatively small—meaning the primary cost lies in evaluating predicate conditions rather than deciding which indices to examine. Crucially, \emph{correctness} is preserved by the LLP model: scanning every index repeatedly will still converge to a valid solution if one exists, just as bag-based approaches do.

\paragraph{Shared Work Bag:}

This version is a simple extension of the bag-based single-threaded solver. It uses a shared work bag (a collection of elements that might be forbidden with no particular order) to distribute work across multiple threads:

\begin{figure}[htb]
\begin{minipage}{\linewidth}
 \begin{lstlisting}[language=rust,caption=Simplified multi-threaded LLP solver with work stealing,basicstyle=\ttfamily\scriptsize]
  pub fn lattice_linear_solver_multi_threaded_shared_work_bag<T, S>(
    problem: &impl LatticeLinearProblem<T, S>,
    num_threads: usize,
  ) -> Vec<S>
  where T: Atomize, S: Atomize {
    // Initialize shared state
    let global_state = Arc::new(problem.init_global_state());
    let shared_bag = Injector::new(); // Lock-free work-stealing queue
     
    // Initialize with starting states
    problem.initial_states_to_process(&global_state, &shared_bag);
   
    // Create thread pool and process elements
    rayon::ThreadPoolBuilder::new()
      .num_threads(num_threads)
      .build()
      .unwrap()
      .install(|| {
        (0..num_threads).into_par_iter().for_each(|_| {
          // Each thread processes work until no more work is available
          while let Steal::Success((idx, _)) = shared_bag.steal() {
            if problem.ensure(&global_state, idx, &shared_bag) {
              // State was forbidden and advanced
              trace_forbidden(idx, true);
            }
          }
        });
      });
   
    problem.final_solution(Arc::try_unwrap(global_state).unwrap())
  }
\end{lstlisting}
\end{minipage}
\end{figure}

For this purpose, the implementation uses \texttt{Arc}, which stands for Atomic Reference Counting, for the distribution of global state between threads. This allows for the concurrent access of the solution vector through atomic operations without the need for locks. This allows each thread to read and modify the state independently, hence reducing the number of contention points.

The work-stealing queue (\texttt{Injector}) distributes work across threads. When a thread advances a state, it adds newly discovered forbidden states to this shared queue. Threads that have no work can steal tasks from the queue, automatically balancing the workload across the thread pool.

Rayon's thread pool handles thread creation and management. Each thread runs until it can no longer steal work from the shared queue. The LLP framework's monotonic property ensures correctness: states can only advance a finite number of times, and each advancement moves toward the final solution regardless of the order of processing.

This approach ensures that the thread making progress through a certain state enqueues the states that are potentially forbidden, which may be processed by any thread, hence ensuring that there is always progress and that all threads are utilized almost uniformly.

Although this approach is robust, it may be prone to performance degradation at high thread counts, hence the need for using thread queues.

\paragraph{Per-Thread Work Bags:}
To reduce contention from multiple threads accessing a shared work bag, we implement a version with per-thread work bags. Each thread maintains its own local work queue, only attempting to steal work from other threads when its queue is empty.

\begin{figure}[htb]
\begin{minipage}{\linewidth}
\begin{lstlisting}[language=rust,caption=Simplified version of a multi-threaded LLP solver with per-thread work bags,basicstyle=\ttfamily\scriptsize]
pub fn lattice_linear_solver_multi_threaded_per_thread_work_bag<T, S>(
  problem: &impl LatticeLinearProblem<T, S>,
  num_threads: usize,
) -> Vec<S>
where T: Atomize + Sync, S: Atomize {
  let global_state = Arc::new(problem.init_global_state());

  // Create a global injector and per-thread workers
  let injector = Injector::new();
  let workers: Vec<Worker<(usize, usize)>> = 
    (0..num_threads).map(|_| Worker::new_fifo()).collect();
  let stealers: Vec<Stealer<(usize, usize)>> = 
    workers.iter().map(|w| w.stealer()).collect();

  // Initialize with starting states
  problem.initial_states_to_process(&global_state, &injector);

  thread_pool.scope(|s| {
    for worker in workers.into_iter() {
      s.spawn(move |_| {
        let mut worker = worker;
        loop {
          // Try local queue first, then global, then steal from others
          let task = worker
            .pop()
            .or_else(|| injector.steal().success())
            .or_else(|| {
              for stealer in &stealers {
                if let Steal::Success(t) = stealer.steal() {
                  return Some(t);
                }
              }
              None
            });

          match task {
            Some((idx, _)) => {
              if problem.is_forbidden(&global_state, idx) {
                problem.advance(&global_state, idx, &mut worker);
              }
            }
            None => break, // No more work available
          }
        }
      });
    }
  });

  problem.final_solution(Arc::try_unwrap(global_state).unwrap())
}
\end{lstlisting}
\end{minipage}
\end{figure}

This implementation introduces a hierarchical work distribution system. At its core, each thread operates on a dedicated, thread-local bag, implemented as a \texttt{Worker}. These local bags serve as the primary source of work for each thread, allowing most operations to proceed without any inter-thread synchronization. The global \texttt{Injector} queue holds the initial work items and serves as a secondary source of work.

The work-stealing mechanism follows a priority order. A thread first attempts to retrieve work from its local queue, maximizing cache locality and minimizing synchronization overhead. When the local queue is empty, the thread attempts to steal work from the global injector. Only after both local and global sources are exhausted does the thread attempt to steal work from other threads' queues, scanning through available stealers until work is found or all possibilities are exhausted.

This hierarchical approach significantly reduces contention compared to the shared work bag implementation. Threads primarily operate on their local queues without synchronization, only engaging in potentially contentious operations when local work is depleted. The design benefits problems with good locality characteristics, where related work items tend to be processed by the same thread.

The implementation maintains the same correctness guarantees as the shared work bag version through the LLP framework's monotonic properties. When a thread advances a state, new potentially forbidden states are added to its local queue, maintaining locality while still allowing work distribution through the stealing mechanism when necessary for load balancing.

\paragraph{Parallel Buckets Solver:}
In some problems, a natural priority (or cost) metric emerges for each element, such as a distance or a weight. In these scenarios, we can group elements into a series of \emph{buckets}, each corresponding to a range of priority values, and allow threads to ``pop'' from the lowest non-empty bucket first. This approach helps the algorithm focus on states with higher urgency (or lower cost) and can often yield significant speedups.

\begin{figure}[htb]
\begin{minipage}{\linewidth}
\begin{lstlisting}[language=rust,caption=Multi-threaded solver that uses buckets to organize work,basicstyle=\ttfamily\scriptsize]
pub fn lattice_linear_solver_multi_threaded_buckets<T, S, AdditionalState>(
  problem: &impl LatticeLinearProblem<T, S, AdditionalState>,
  num_threads: usize,
  delta: usize,
) -> Vec<S>
where
  T: Atomize,
  S: Debug,
  AdditionalState: Debug + Send + Sync,
{
  // 1) Initialize global state in an Arc, so threads can share it.
  let global_state = Arc::new(problem.init_global_state());

  // 2) Create a bucket-based worklist (also in an Arc).
  let num_buckets = 1024;
  let bucket_worklist = Arc::new(BucketWorklist::new(num_buckets, delta));

  // 3) Enqueue initial tasks from the problem.
  problem.initial_states_to_process(&global_state, &*bucket_worklist);

  // 4) Build a thread pool with num_threads.
  let thread_pool = rayon::ThreadPoolBuilder::new()
    .num_threads(num_threads)
    .build()
    .unwrap();

  // 5) The concurrency pattern: each of the num_threads workers
  //  repeatedly pops from the bucketed worklist until no more items remain.
  thread_pool.install(|| {
    (0..num_threads).into_par_iter().for_each(|_| {
      while let Some((idx, cost)) = bucket_worklist.pop() {
        if problem.is_forbidden(&global_state, idx) {
          problem.advance(&global_state, idx, &*bucket_worklist);
        }
      }
    });
  });

  // 6) Extract the final solution from Arc once all threads have finished.
  let global_state = Arc::try_unwrap(global_state).unwrap();
  problem.final_solution(global_state)
}
\end{lstlisting}
\end{minipage}
\end{figure}

In contrast to per-thread or shared-bag solvers, this \textbf{bucket-based} approach is most useful when each index (vertex, edge, or job) has a meaningful priority. Grouping work items by cost ranges lets the solver process lower-cost updates first. This can reduce redundant work in problems such as SSSP, where high-distance updates are often less urgent.

Each thread repeatedly pops tasks from the lowest non-empty bucket, advances the corresponding state if needed, and pushes newly forbidden states (with their costs) back into the appropriate bucket. As in the other solvers, correctness follows from monotonic atomic updates: once no forbidden indices remain, the computation converges.

\subsubsection{Comparison of Multi-threaded Solvers}

Different multi-threaded solvers are preferable in different contexts. Key selection criteria include:

\begin{itemize}
  \item \textbf{Solution vector size:} For smaller vectors (for example, \(10\) to \(10^4\) elements), all-indices-parallel scanning can be acceptable. For larger vectors (for example, SSSP on millions of vertices), targeted worklists are usually necessary.
  \item \textbf{Predicate locality:} If forbiddenness changes remain local, per-thread work bags often improve cache locality and reduce stealing. If updates can activate distant states, shared work bags may distribute work more effectively.
  \item \textbf{Priority structure:} Some problems expose natural priorities (for example, distance in SSSP). When such priorities are informative, bucketed scheduling can reduce wasted updates.
\end{itemize}

One key advantage of LLP-FW is the separation between problem logic and solver logic. As long as a problem implementation satisfies the framework interface, we can evaluate multiple solver variants without rewriting the core algorithm. This makes solver choice an empirical decision per workload rather than a one-time architectural commitment.

\subsection{Memoization Through Fixed States}
\label{sec:cm-memoization}

Some LLP problems expose an additional optimization: once an index reaches a problem-specific \emph{fixed} condition, future monotone updates cannot make it forbidden again. In those cases, rechecking that index is redundant work. LLP-FW therefore supports memoization through a lock-free fixed-state bit vector in the global state.

The optimization is conservative. A problem marks an index as fixed only when it can justify that decision from its own invariants (for example, a shortest-path label that is provably final under the active schedule policy). Solvers then skip fixed indices in hot loops and can use an ``all fixed'' fast path for termination checks. This removes repeated forbiddenness tests, reduces queue churn, and lowers contention on shared structures.

The benefit is empirical rather than universal. In workloads where labels stabilize early, memoization noticeably reduces wasted work; where values keep changing, the fixed set grows slowly and the gain is smaller. The trends in the SSSP and Job Scheduling scheduler analyses (Figures~\ref{fig:sssp-worklists} and~\ref{fig:job-worklists}) are consistent with this mechanism: policies that concentrate work near recent updates allow more indices to become fixed earlier and stay out of the active frontier.

\section{Common LLP problems implemented in the LLP framework}

This section presents seven LLP workloads using one common template: \emph{Formulation}, \emph{Implementation}, \emph{Setup}, and \emph{Results and Discussion}. We keep exploratory plots only when they materially change interpretation (for example, revealing scheduler sensitivity or regime changes); broader raw sweeps remain available in the generated tables and auxiliary artifacts.

\subsection{Shared Experimental Protocol}
\label{sec:cm-eval}

We implemented LLP-FW in Rust using custom lock-free and concurrent data structures, with \texttt{rayon} for selected runtime components (for example, work stealing). Benchmarks are implemented and executed with the \texttt{criterion} rust crate.

We ran benchmarks on an AWS virtual machine of type: \texttt{c7i.16xlarge} VM (Intel Xeon Platinum 8375C, 2.3 GHz, 128 GB RAM, NVMe storage).

Unless explicitly noted otherwise, we report medians over repeated runs, category summaries use geometric means, and speedups are computed from per-thread medians against the named baseline in each figure/table. Differences below roughly 5\% are interpreted as near parity unless they repeat consistently across datasets.

\paragraph{Threats to validity:}
Short-running workloads on shared-cloud hardware can be affected by background jitter, turbo variation, and NUMA placement. Baselines are tuned but constrained by implementation choices available in this codebase (queue structures, \(\Delta\)-stepping parameters, bucket hints), so speedups are in comparison to the internal implementations only. We made this choice to ensure that results were directly comparable and not affected by different runtimes or programming languages.

\subsection{Single Source Shortest Path}   
\label{sec:llp-sssp}

The Single Source Shortest Path (SSSP) problem involves finding the shortest paths from a designated source vertex to all other vertices in a weighted graph. It is a fundamental problem in graph theory and has wide applications in networks, routing, and optimization.

\noindent\textbf{Problem Definition:} Given a weighted graph \( G = (V, E, w) \) where \( V \) is the set of vertices, \( E \) is the set of edges, and \( w: E \rightarrow \mathbb{R}^+ \) assigns a non-negative weight to each edge, and a source vertex \( s \in V \), find the shortest path from \( s \) to every other vertex \( v \in V \).

\noindent The goal is to compute a distance function \( d: V \rightarrow \mathbb{R}^+ \cup \{\infty\} \) such that \( d(s) = 0 \) and for all \( v \in V \), \( d(v) \) is the minimum total weight of any path from \( s \) to \( v \).

\subsubsection{Formulation}
\label{sec:sssp-llp-formulation}

The SSSP problem can be framed as a lattice linear problem by defining a suitable lattice structure and forbiddenness conditions.

\noindent\textbf{Lattice Structure:} The solution space is a lattice where each vertex \( v \) is associated with a distance \( d(v) \). The lattice ordering is defined by the natural ordering of distances: for distances \( d_1 \) and \( d_2 \), \( d_1 \leq d_2 \) if \( d_1 \) is less than or equal to \( d_2 \).

\noindent\textbf{Global State:} The global state comprises the distance estimates for all vertices:

\[
\text{GlobalState} = \{ d(v) \mid v \in V \}
\]

\noindent Initially, \( d(s) = 0 \) and \( d(v) = \infty \) for all \( v \neq s \).

\noindent\textbf{Forbiddenness Condition:} A vertex \( v \) is considered \emph{forbidden} if there exists an edge \( (u, v) \in E \) such that:

\[
d(v) > d(u) + w(u, v)
\]

\noindent This implies that there is a shorter path to \( v \) via \( u \) that has not yet been accounted for in \( d(v) \).

\smallskip

\noindent\textbf{Advance Function:} When a vertex \( v \) is forbidden, the advance function updates its distance:

\[
d(v) := \min \{ d(v), d(u) + w(u, v) \mid (u, v) \in E \}
\]

\noindent After updating, vertices that depend on \( v \) (i.e., neighbors of \( v \)) may become forbidden and need to be processed.

This formulation is closely related to the classical Bellman-Ford algorithm: each relaxation step in Bellman-Ford corresponds exactly to advancing a forbidden vertex in the LLP view. The difference is in how work is scheduled. Bellman-Ford processes all vertices in lock-step rounds; $\Delta$-stepping relaxes them in global buckets grouped by tentative distance. The LLP formulation instead makes the scheduling decision purely local: any vertex that violates the forbiddenness predicate can be advanced by any thread at any time, with no global round or bucket barrier. This means the LLP runtime can exploit fine-grained parallelism that round-based approaches leave on the table, at the cost of additional per-vertex coordination (atomic reads and compare-and-swap updates). Whether that trade-off pays off in practice depends on how many vertices are simultaneously forbidden and how effectively the worklist keeps threads focused on productive work (see the example in Section~\ref{sec:cm-llp-example}).

\subsubsection{Implementation}
\label{sec:sssp-impl}

To demonstrate how the LatticeLinearProblem trait presented in Listing~\ref{lst:core-llp-trait} materialises in practice, we examine a simplified version of the Single-Source Shortest Path implementation. The problem is captured by a compressed sparse row graph together with the distinguished source vertex (Listing~\ref{lst:sssp-code-struct}). Within the LLP framework three entities interact: the immutable problem definition (\texttt{struct SSSP}), the shared mutable state (\texttt{struct GlobalState}) that threads update through atomics, and the worklist abstraction (\texttt{struct Worklist}) that steers the solver toward the most promising vertices.

\begin{figure}[htb]
\begin{minipage}{\linewidth}
\begin{lstlisting}[language=rust,caption=Representation of the SSSP problem, label={lst:sssp-code-struct}, basicstyle=\ttfamily\scriptsize]
pub struct SSSP {
  // The graph, stored as compressed sparse rows
  graph: &UndirectedCsrGraph<usize, (), usize>,
  // The source node
  source: usize,
}

impl SSSP {
  // Return the minimum distance from any neighbor
  fn min_dist_from_neighbors(&self,
    global_state: &GlobalState<usize>,
    index: usize) -> usize;

  // Compare-and-swap the distance for a given vertex,
  // retuning the actual distance if the compare-and-swap fails.
  fn compare_and_swap(&self,
            global_state: &GlobalState<usize>,
            index: usize,
            new_dist: usize) -> Result<(), usize>;
}

{...}
\end{lstlisting}
\end{minipage}
\end{figure}

We then implement the LatticeLinearProblem trait for the SSSP problem, including the \texttt{is\_forbidden} and \texttt{advance} functions. We chose the LLP Bellman-Ford as it is easy to explain and implement while still providing quite good performance in practice. As the actual implementation quite more complex a simplified version is shown in Listing~\ref{lst:sssp-code-impl}.

\begin{figure}[htb]
\begin{minipage}{\linewidth}
\begin{lstlisting}[language=rust,caption=Simplified implementation of the SSSP problem, label={lst:sssp-code-impl}, basicstyle=\ttfamily\scriptsize]
impl LatticeLinearProblem<usize> for SSSP {
  fn is_forbidden(&self, global_state: &GlobalState<usize>, index: usize) -> bool {
    // Source vertex is never forbidden
    if index == self.source {
      return false;
    }
    let current_dist = global_state.read(index);   
    // Check if any neighbor offers a shorter path
    self.min_dist_from_neighbors(global_state, index) < current_dist
  }

  fn advance<W: Worklist>(
    &self,
    global_state: &GlobalState<usize>,
    index: usize,
    worklist: &W,
  ) -> bool {
    let mut current_dist = global_state.read(index);
     
    // Find minimum distance through any neighbor and update the distance if it's smaller,
    // using compare-and-swap to ensure atomicity.
    loop {
      let min_dist = self.min_dist_from_neighbors(global_state, index);
      if min_dist >= current_dist {
        return false;
      }

      match self.compare_and_swap(global_state, index, min_dist) {
        // If the compare-and-swap succeeds, we update the distance and add the neighbors
        // to the worklist.
        Ok(_) => {
          worklist
            .push_all(self.graph.neighbors(index));
          return true;
        }
        Err(actual) => current_dist = actual,
      }
    }
    false
  }
}
\end{lstlisting}
\end{minipage}
\end{figure}

The full implementation is in \texttt{src/algorithms/sssp.rs}, but the above listing gives a sense of the essential structure. There are two features that set this apart from a standard textbook implementation of the Bellman Ford loop. One is that the update of the distance in the \texttt{advance} loop is done using a compare and swap retry loop, not a guarded write, which allows us to relax the distance of a vertex concurrently from many threads without the need to lock the vertex. The other is that, after a successful relaxation, only the neighbours of the vertex are added to the work list, which is a demand-driven approach that does not scan the entire set of vertices on each iteration.

The full solver also adds a significant scheduling refinement to this basic loop structure. Whenever the distance to a vertex is reduced, the neighbours of the vertex are enqueued to the per-thread work bag scheduler with a priority based on the new tentative distance. The idea is that a neighbour of a vertex that has just had its distance reduced is likely to become forbidden shortly, and therefore it is a waste to relax it later, as it would cause redundant work if it is eventually reached by a longer path. This "recency bias" focuses each thread's work near the currently active frontier and minimizes the number of compare-and-swap operations performed by the solver. Section~\ref{sec:sssp-results} presents the results, showing that PTWB can reduce runtime by an order of magnitude compared to schedulers without this priority cue on our benchmark graphs.

\subsubsection{Setup}

We use a variety of synthetic and real-world graphs to evaluate our SSSP solver, including power law, road, citation, DAG, and mesh graphs. Some representative datasets and size ranges are listed in Table~\ref{tab:sssp-setup}. Our baseline is a Rust implementation of the $\Delta$-stepping algorithm with a shared bucket data structure, using a tuned $\Delta$ value, which is determined by selecting the appropriate power of two on the median edge weight across each graph family using a grid search. We use a fixed random seed to select source vertices, holding this constant across all solvers and thread counts for a particular graph. All experiments are performed on the platform described in Section~\ref{sec:cm-eval}.

\begin{table}[htb]
  \centering
  \resizebox{\linewidth}{!}{
  \begin{tabular}{lccc}
    \toprule
    \textbf{Graph family} & \textbf{Representative datasets} & \textbf{\(|V|\)} & \textbf{\(|E|\)} \\
    \midrule
    Power-law (Kronecker/RMAT) & \texttt{kron-2e14}..\texttt{kron-2e18}, \texttt{rmat8} & $1.6\times10^{4}$--$2.6\times10^{5}$ & $6.2\times10^{4}$--$1.5\times10^{6}$ \\
    Road network & \texttt{Asia\_rand} & $1.2\times10^{7}$ & $1.27\times10^{7}$ \\
    Citation networks & \texttt{coAuthorsCiteseer}, \texttt{citationCiteseer}, \texttt{coAuthorsDBLP} & $2.3\times10^{5}$--$3.0\times10^{5}$ & $8.1\times10^{5}$--$1.2\times10^{6}$ \\
    DAGs & \texttt{DAG1}..\texttt{DAG3} & $5.0\times10^{2}$--$2.0\times10^{3}$ & $1.8\times10^{4}$--$4.8\times10^{4}$ \\
    Structured meshes & \texttt{torus5}, \texttt{rome99} & $3.2\times10^{1}$--$3.4\times10^{3}$ & $9.2\times10^{1}$--$7.1\times10^{3}$ \\
    Synthetic power-law (graph500) & \texttt{graph500-s18-ef16} & $1.7\times10^{5}$ & $7.6\times10^{6}$ \\
    \bottomrule
  \end{tabular}
  }
  \caption{Graph families and example datasets used in the SSSP evaluation.}
  \label{tab:sssp-setup}
\end{table}

\subsubsection{Results and Discussion}
\label{sec:sssp-results}

We structure the SSSP evaluation around four questions: what overhead does the LLP abstraction add at low thread counts, where does it pay off as parallelism increases, how sensitive is performance to worklist policy, and where does LLP fail to improve over the baseline.

\smallskip
\noindent\textbf{Single-thread overhead.}
At one thread, the LLP solver is consistently slower than $\Delta$-stepping. On \texttt{kron-2e14} the best single-thread LLP variant (\emph{all-parallel}) runs in \(3.91\)~ms versus \(0.38\)~ms for $\Delta$-stepping, roughly a \(10\times\) overhead. On the larger \texttt{kron-2e16} the ratio is similar: \(23.8\)~ms versus \(2.6\)~ms. This overhead has two sources. First, every distance update requires a compare-and-swap even when no other thread is contending, whereas $\Delta$-stepping performs plain writes in its sequential path. Second, the LLP solver tests the forbiddenness predicate for each vertex it pops, which involves reading all neighbour distances; $\Delta$-stepping avoids this by maintaining bucket membership implicitly. The overhead is the price of generality: the same code path works unchanged at higher thread counts where the atomic operations become necessary.

\smallskip
\noindent\textbf{Baseline comparison across graph morphologies.}
Table~\ref{tab:sssp-llp-baseline} reports the fastest LLP configuration versus tuned $\Delta$-stepping at 32 threads across representative graphs. The speedups are heterogeneous and instructive.

\begin{table}[htb]
  \centering
  \resizebox{\linewidth}{!}{
  \begin{tabular}{lrrrrr}
    \toprule
    \textbf{Dataset} & \textbf{$|V|$} & \textbf{$|E|$} & \textbf{$\Delta$-stepping (ms)} & \textbf{Best LLP (ms)} & \textbf{Speedup} \\
    \midrule
    \texttt{kron-2e14-rnd} & 16\,364 & 61\,851 & 11.88 & 2.51 & $4.74\times$ \\
    \texttt{kron-2e15-rnd} & 32\,752 & 136\,319 & 13.47 & 3.57 & $3.77\times$ \\
    \texttt{rmat8-2e14} & 16\,383 & 130\,148 & 10.93 & 3.25 & $3.37\times$ \\
    \texttt{kron-2e16-rnd} & 65\,523 & 300\,245 & 14.73 & 6.92 & $2.13\times$ \\
    \texttt{Asia\_rand} & 11\,950\,757 & 12\,711\,603 & 872.10 & 569.06 & $1.53\times$ \\
    \texttt{graph500-s18-ef16} & 174\,066 & 7\,600\,884 & 23.44 & 25.29 & $0.93\times$ \\
    \texttt{coAuthorsDBLP} & 299\,067 & 1\,207\,064 & 12.81 & 34.57 & $0.37\times$ \\
    \bottomrule
  \end{tabular}
  }
  \caption{32-thread SSSP comparison between tuned $\Delta$-stepping and the fastest LLP configuration per dataset.}
  \label{tab:sssp-llp-baseline}
\end{table}

The results reveal a clear density-dependent pattern. Sparse graphs with narrow frontiers favour LLP: \texttt{kron-2e14} reaches \(4.7\times\), and \texttt{rmat8-2e14} achieves \(3.4\times\) despite having \(2\times\) more edges at similar vertex count. As edge density grows further, the advantage shrinks: \texttt{kron-2e16} (\(300\)k edges, \(2.13\times\)) has roughly half the speedup of \texttt{kron-2e15} (\(136\)k edges, \(3.77\times\)), tracking the growth in frontier width. The road network \texttt{Asia\_rand} achieves a modest \(1.53\times\) despite its narrow frontiers, because the large graph diameter means both approaches must process many sequential wavefronts. On the dense end, the two losses have different causes but the same root. \texttt{graph500-s18-ef16} (\(0.93\times\)) has an average degree of \(\approx\!44\): each relaxation pushes dozens of neighbours onto the worklist, most of which are already being processed by other threads, so the frontier is persistently wide and LLP's per-vertex CAS cost is paid on nearly every vertex with no compensating selectivity. \texttt{coAuthorsDBLP} (\(0.37\times\)) has moderate average degree (\(\approx\!4\)) but strong community structure and small-world diameter, which causes the SSSP frontier to rapidly engulf a large fraction of the graph; vertices within a community settle at similar distances simultaneously, creating the same wide-frontier regime. In both cases $\Delta$-stepping's synchronized bucket sweeps amortise coordination over many vertices per phase using plain stores, while LLP pays the atomic overhead on every individual update.

\smallskip
\noindent\textbf{Scaling behavior.}
Figure~\ref{fig:sssp-scaling} reports strong scaling on \texttt{kron-2e16}. At one thread, $\Delta$-stepping is substantially faster due to the overhead discussed above. As thread count increases, $\Delta$-stepping's global bucket barrier becomes a bottleneck: threads must synchronize at the end of each bucket before the next distance range is released, and the useful work per round shrinks as the frontier narrows. The LLP solver has no such barrier. Each thread independently pops forbidden vertices from its local queue and attempts relaxations, so work is distributed continuously rather than in synchronized waves. The crossover occurs around four threads in this experiment, and the LLP advantage widens modestly through eight threads. On the road network \texttt{Asia\_rand}, the pattern differs: the baseline's global bucket barrier produces erratic scaling, while LLP-PTWB improves steadily from \(10{,}297\)~ms at one thread to \(569\)~ms at 32 threads. The crossover occurs around four threads (\(1.81\times\)), and LLP reaches \(1.53\times\) at 32 threads---a modest win, but one that reflects the road graph's narrow frontiers favouring per-thread locality even when the absolute speedup over the baseline is limited by the smaller graph diameter relative to the BFS case.

\begin{figure}[htb]
  \centering
  \includegraphics[width=0.48\linewidth]{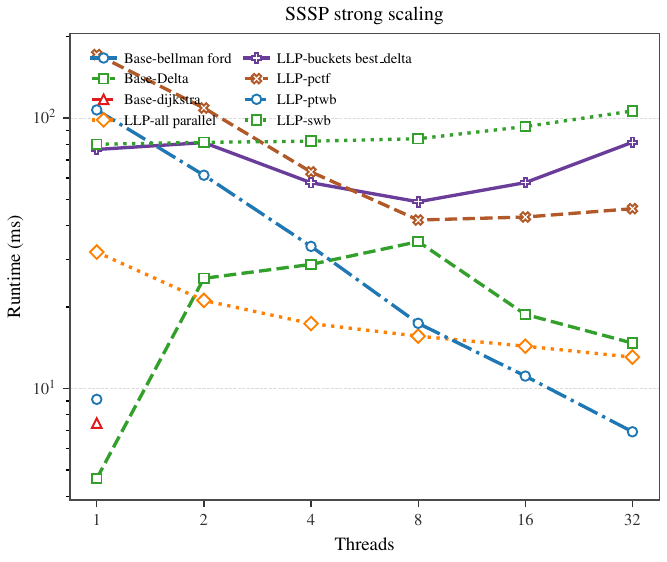}\hfill
  \includegraphics[width=0.48\linewidth]{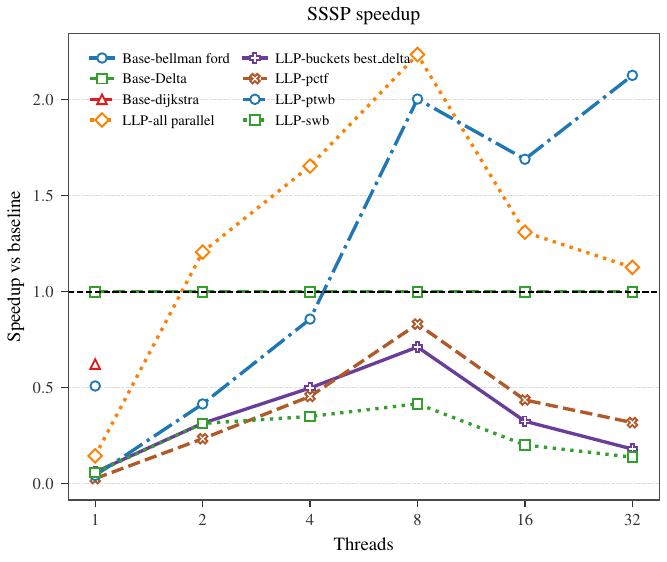}
  \caption{Runtime (left) and speedup over $\Delta$-stepping (right) on \texttt{kron-2e16}.}
  \label{fig:sssp-scaling}
\end{figure}

\smallskip
\noindent\textbf{Worklist policy is a first-order effect.}
Figure~\ref{fig:sssp-worklists} isolates scheduler effects by holding the graph fixed at \texttt{kron-2e16} and varying only the LLP worklist at eight threads. The difference is dramatic: PTWB finishes in roughly \(11.6\)~ms, while SWB requires \(40.9\)~ms and bucketed queues need \(79.1\)~ms. The mechanism behind PTWB's advantage is the ``recency bias'' introduced in Section~\ref{sec:sssp-impl}: when a vertex's distance decreases, its neighbours are enqueued with a priority reflecting the new tentative distance. Threads therefore revisit vertices near the active frontier before stale vertices deeper in the queue. This reduces redundant relaxations---cases where a vertex is popped from the worklist but its distance has already been improved by another thread since it was enqueued. SWB lacks this priority signal entirely, and bucketed queues group by distance range but do not distinguish between fresh and stale entries within a bucket. On the road network at 32 threads the effect is even larger: PTWB reaches \(293.7\)~ms while the next-best policy (PTCF) requires \(2{,}541\)~ms, an \(8.6\times\) gap.

\begin{figure}[htb]
  \centering
  \includegraphics[width=0.6\linewidth]{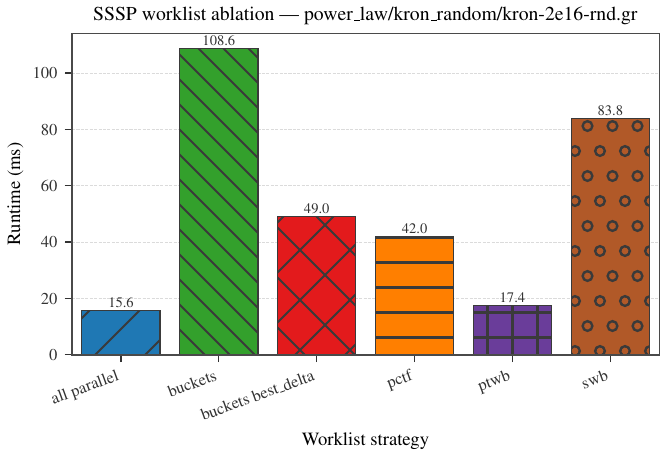}
  \caption{Impact of LLP worklists on \texttt{kron-2e16} at eight threads. PTWB minimises redundant relaxations.}
  \label{fig:sssp-worklists}
\end{figure}

\smallskip
\noindent\textbf{Where LLP does not help.}
The LLP approach does not uniformly dominate. On the citation networks in our benchmark suite (\texttt{coAuthorsDBLP}, \texttt{citationCiteseer}), $\Delta$-stepping remains faster even at 32 threads. These graphs have moderate vertex counts (\(\approx\)300k) but dense, highly connected community structure that produces wide forbidden frontiers at nearly every step. In this regime the per-vertex atomic overhead is paid on a large fraction of vertices simultaneously, and the priority signal that PTWB relies on is less effective because many vertices improve in parallel with similar tentative distances, diluting the benefit of recency-based ordering. This limitation is a useful boundary marker: the LLP formulation is most effective when the forbidden frontier is narrow relative to the graph, and less effective when the frontier is persistently wide.

\smallskip
\noindent\textbf{Summary.}
The SSSP evaluation reveals a clear trade-off profile for the LLP approach. At one thread, the abstraction adds measurable overhead (\(\approx\!\!10\times\)) due to atomic operations and predicate checks. That overhead is recovered as thread counts grow, with the crossover point depending on graph morphology: sparse or structured graphs (power-law Kronecker, road networks) cross over early and deliver large gains at high thread counts, while dense community-structured graphs remain unfavorable. Across all regimes, worklist policy is at least as important as raw parallelism: PTWB's recency-biased scheduling reduces redundant work by an order of magnitude on graphs where it applies. For a practitioner, the implication is that LLP is a strong choice for SSSP when the target graphs have narrow forbidden frontiers and the hardware offers moderate-to-high parallelism, but dense graphs where a large fraction of vertices are simultaneously forbidden are better served by bucket-based approaches like $\Delta$-stepping.

\subsection{Breadth-First Search (BFS)}
\label{sec:bfs}

\subsubsection{Formulation}
Breadth-First Search assigns each vertex a \emph{level}---its hop distance from a distinguished source---and produces a shortest-path tree for unweighted graphs. In the LLP view, the global state is a vector of level estimates, initially \(\infty\) everywhere except the source (level~0). A vertex \(v\) is \emph{forbidden} if it has a neighbour \(u\) whose level is strictly less than \(d(v) - 1\), meaning a shorter path to \(v\) exists through \(u\) that has not yet been recorded. Advancing a forbidden vertex sets its level to one plus the minimum neighbour level and enqueues its own neighbours for re-evaluation.

Because all edge weights are implicitly one, levels can only decrease and each vertex settles at most once: after a vertex is first discovered from the nearest frontier, no future advance can lower its level further. This monotonicity is the same lattice property that SSSP exploits, but with a simpler order (integer hops rather than real-valued distances). The practical consequence is that the forbidden frontier in BFS tends to be thinner and more structured than in weighted SSSP, which changes the relative importance of scheduler policy as we discuss below.

\subsubsection{Implementation}
The LLP implementation mirrors the familiar frontier-based BFS. Vertices are stored in contiguous arrays and levels are updated through atomic compare-and-swap, exactly as in the SSSP solver. The key difference is that no priority signal accompanies the worklist push: since all edges have unit weight, there is no distance delta to use as a scheduling hint. Instead, the solver relies on spatial locality---PTWB partitions the frontier across threads so that each thread processes a contiguous slab of neighbours, reducing cache thrashing and cross-core traffic. For graphs where the frontier becomes very wide (dense power-law instances), the solver can fall back to a shared work bag (SWB) that distributes vertices more evenly at the cost of a global queue.

\subsubsection{Setup}
We exercise BFS on a superset of the SSSP graph families, adding larger Kronecker instances (up to \texttt{kron-2e22}, 4.2M vertices, 34M edges) and Erd\H{o}s--R\'enyi random graphs. Table~\ref{tab:bfs-setup} summarises the datasets. The baseline is a multi-threaded lock-free queue BFS shipped with the benchmark suite (\texttt{baseline mt}), which uses a global concurrent queue for frontier management. Source vertices are chosen with a fixed random seed and held constant across all solvers. All runs use the platform described in Section~\ref{sec:cm-eval}.

\begin{table}[htb]
  \centering
  \small
  \begin{tabular}{lccc}
    \toprule
    \textbf{Graph family} & \textbf{Representative datasets} & \textbf{$|V|$} & \textbf{$|E|$} \\
    \midrule
    Road networks & \texttt{Asia\_rand}, \texttt{Italy\_rand}, \texttt{Britain\_rand} & $1.4\times10^{6}$--$1.2\times10^{7}$ & $1.6\times10^{6}$--$1.3\times10^{7}$ \\
    Power-law (Kronecker) & \texttt{kron-2e14}..\texttt{kron-2e22} & $1.6\times10^{4}$--$4.2\times10^{6}$ & $1.3\times10^{5}$--$3.4\times10^{7}$ \\
    Citation networks & \texttt{coPapersCiteseer}, \texttt{coPapersDBLP} & $4.3\times10^{5}$--$5.4\times10^{5}$ & $1.4\times10^{7}$--$1.5\times10^{7}$ \\
    DAGs & \texttt{DAG1}..\texttt{DAG3} & $5.0\times10^{2}$--$2.0\times10^{3}$ & $1.8\times10^{4}$--$4.7\times10^{4}$ \\
    Structured meshes & \texttt{torus5}, \texttt{rome99} & $3.2\times10^{1}$--$3.4\times10^{3}$ & $9.2\times10^{1}$--$7.1\times10^{3}$ \\
    \bottomrule
  \end{tabular}
  \caption{BFS evaluation datasets.}
  \label{tab:bfs-setup}
\end{table}

\subsubsection{Results and Discussion}
\label{sec:bfs-results}

We organise the BFS evaluation around the same questions as SSSP: single-thread overhead, cross-family baseline comparison, scaling behaviour, worklist sensitivity, and limitations.

\smallskip
\noindent\textbf{Single-thread overhead.}
Unlike SSSP, the LLP overhead at one thread is modest for most graph families. On road networks, LLP is actually \emph{faster} than the multi-threaded baseline even with a single thread: \texttt{Asia\_rand} finishes in \(1{,}742\)~ms (LLP-bag) versus \(2{,}461\)~ms (baseline-st), a \(0.71\times\) ratio. The same pattern holds on \texttt{Belgium\_rand} (\(0.44\times\)), \texttt{Britain\_rand} (\(0.77\times\)), and \texttt{Italy\_rand} (\(0.67\times\)). Road graphs have very low average degree (\(\approx\)1.1 edge per vertex), so the forbiddenness check touches only one or two neighbours and the compare-and-swap rarely retries. On small structured meshes and DAGs, the overhead is negligible (ratio \(\approx\)1.0). The overhead grows on denser graphs: Kronecker instances range from \(1.5\times\) (\texttt{kron-2e15}) to \(8.5\times\) (\texttt{kron-2e22}), and dense citation networks reach \(4\)--\(5\times\). As with SSSP, this reflects the cost of atomic operations applied to wide frontiers where many vertices are simultaneously active.

\smallskip
\noindent\textbf{Baseline comparison across graph families.}
Table~\ref{tab:bfs-llp-baseline} reports the best LLP configuration versus the multi-threaded baseline at 32 threads. The road network result dominates the table: PTWB reaches \(16.1\times\) on \texttt{Asia\_rand}, sustaining over 27 million vertices per second. The explanation is structural. Road graphs produce long, thin BFS frontiers---each level contains relatively few vertices spread across a large diameter. The baseline's global concurrent queue serialises access to this narrow frontier, creating contention that worsens with thread count. PTWB avoids this by giving each thread its own queue partition, so threads can drain their local frontier slabs without cross-core synchronisation.

\begin{table}[htb]
  \centering
  \resizebox{\linewidth}{!}{
  \begin{tabular}{l l r r l r l r r r r}
    \toprule
    \textbf{Family} & \textbf{Dataset} & \textbf{$|V|$ (K)} & \textbf{$|E|$ (K)} & \textbf{LLP solver} & \textbf{LLP (ms)} & \textbf{Baseline} & \textbf{Baseline (ms)} & \textbf{Speedup} & \textbf{Vertices/s (M)} & \textbf{Edges/s (M)} \\
    \midrule
    Road & \texttt{Asia\_rand} & 11950.8 & 12711.6 & PTWB & 437.39 & Base-MT & 7026.84 & $16.07\times$ & 27.32 & 29.06 \\
    Power-law & \texttt{kron-2e22} & 4194.2 & 34132.6 & PTWB & 1991.93 & Base-MT & 3084.08 & $1.55\times$ & 2.11 & 17.14 \\
    Structured & \texttt{rome99} & 3.35 & 7.08 & PTCF & 19.66 & Base-MT & 30.57 & $1.56\times$ & 0.17 & 0.36 \\
    Citation & \texttt{coPapersDBLP} & 540.49 & 15245.7 & PTWB & 403.44 & Base-MT & 1083.41 & $2.69\times$ & 1.34 & 37.79 \\
    DAG & \texttt{DAG3\_rand} & 1.00 & 18.68 & PTWB & 26.25 & Base-MT & 27.19 & $1.04\times$ & 0.04 & 0.71 \\
    Other & \texttt{r4-2e23} & 8388.6 & 33554.4 & PTWB & 1466.12 & Base-MT & 5715.60 & $3.90\times$ & 5.72 & 22.89 \\
    \bottomrule
  \end{tabular}}
  \caption{Best LLP configuration versus the multi-threaded baseline at 32 threads. Throughput is reported in millions of vertices or edges processed per second.}
  \label{tab:bfs-llp-baseline}
\end{table}

Citation networks tell a different story. \texttt{coPapersDBLP} (\(540\)k vertices, \(15.2\)M edges) yields a \(2.69\times\) speedup at 32 threads, which is respectable but far below the road-network result. The difference is that citation graphs have high average degree and community structure, producing wide frontiers where many vertices become forbidden simultaneously. In this regime, the per-thread queue partitioning still helps relative to the global queue, but the sheer volume of concurrent atomic updates limits the gain. Power-law Kronecker graphs (\texttt{kron-2e22}: \(1.55\times\)) show a similar pattern: the frontier is broad enough that queue partitioning provides only a modest edge over the baseline. DAGs stay near parity (\(1.04\times\)) because their small size exhausts available parallelism before either approach has a chance to differentiate.

\smallskip
\noindent\textbf{Scaling behaviour.}
Figure~\ref{fig:bfs-scaling} reports strong scaling on \texttt{Asia\_rand}. The LLP solver (PTWB) improves steadily from \(1{,}742\)~ms at one thread to \(437\)~ms at 32 threads, a self-speedup of \(4.0\times\). Meanwhile, the baseline actually \emph{degrades} with additional threads: it starts at \(2{,}701\)~ms at one thread and rises to \(7{,}027\)~ms at 32 threads. This anti-scaling is a direct consequence of the global queue: on a graph with diameter exceeding 20{,}000 hops, threads spend the majority of their time contending for queue access rather than doing useful traversal work. The LLP solver avoids this pathology entirely because each thread owns a private frontier partition.

On denser graphs, the scaling picture is less dramatic but still favorable. On \texttt{coPapersDBLP}, the baseline is \(4\times\) faster than LLP at one thread (\(833\)~ms vs \(3{,}286\)~ms), but by eight threads LLP overtakes (\(742\)~ms vs \(1{,}069\)~ms, \(1.44\times\)) and widens its lead to \(2.69\times\) at 32 threads. On \texttt{kron-2e22}, the crossover comes later: LLP trails through eight threads and only overtakes at 16 threads (\(1.34\times\)), reaching \(1.55\times\) at 32.

\begin{figure}[htb]
  \centering
  \includegraphics[width=0.48\textwidth]{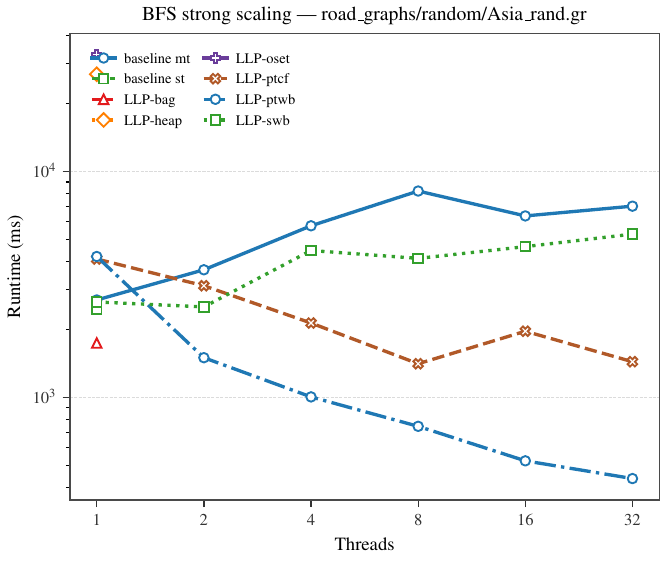}\hfill
  \includegraphics[width=0.48\textwidth]{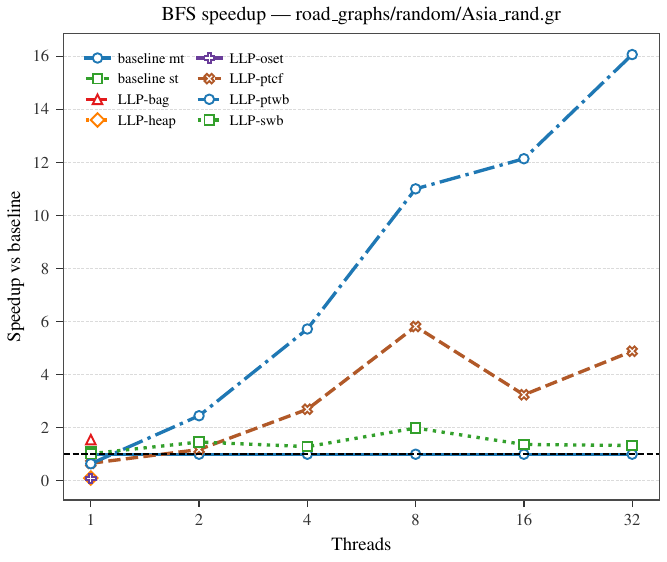}
  \caption{BFS runtime (left) and speedup relative to the baseline (right) on \texttt{Asia\_rand}.}
  \label{fig:bfs-scaling}
\end{figure}

\smallskip
\noindent\textbf{Worklist policy.}
Figure~\ref{fig:bfs-worklists} shows the worklist comparison on \texttt{Asia\_rand} at 32 threads. PTWB finishes in \(437\)~ms, while the next-best LLP policy (PTCF) requires \(1{,}439\)~ms and SWB needs \(5{,}285\)~ms. The gap between PTWB and SWB is a \(12\times\) difference that reflects the same mechanism as in SSSP: per-thread partitioning keeps each thread's working set cache-local and avoids global-queue contention. However, unlike SSSP, the scheduler advantage in BFS does not come from a priority signal (there are no edge weights to prioritise by). Instead, the benefit is purely spatial: PTWB assigns each thread a contiguous range of neighbour indices, so successive cache lines are accessed by the same core rather than being bounced between sockets.

\begin{figure}[htb]
  \centering
  \includegraphics[width=0.62\linewidth]{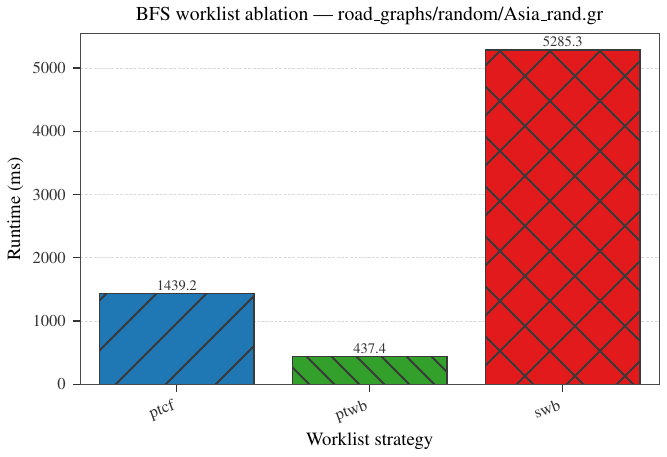}
  \caption{Worklist diagnostics for BFS on \texttt{Asia\_rand} with 32 threads.}
  \label{fig:bfs-worklists}
\end{figure}

\smallskip
\noindent\textbf{Comparison with SSSP.}
The BFS results reinforce and refine the patterns observed in SSSP. Both algorithms show the same structural dependency: LLP gains are largest on graphs with narrow, elongated frontiers (road networks) and smallest on graphs with wide, dense frontiers (power-law, citation). The key difference is the single-thread overhead profile. In SSSP, LLP is consistently slower at one thread (\(\approx\)10\(\times\)) because the forbiddenness check involves reading all neighbour distances and comparing against edge weights. In BFS, the check is cheaper (just compare integer levels), so the single-thread overhead is much lower and on road graphs LLP is actually faster even sequentially. This means the crossover to LLP dominance happens earlier in BFS than in SSSP: on \texttt{Asia\_rand}, LLP leads from the very first thread, whereas SSSP required four or more threads to overtake.

\smallskip
\noindent\textbf{Summary.}
BFS confirms that the LLP runtime transfers effectively to unweighted traversal. Road networks are the standout workload, with \(16\times\) gains at 32 threads driven by the elimination of global-queue contention on narrow frontiers. Citation and power-law graphs benefit more modestly (\(1.5\)--\(2.7\times\)), and tiny graphs show parity. Worklist policy remains important, but the mechanism shifts from priority-based scheduling (as in SSSP) to spatial partitioning---a distinction that matters for practitioners choosing scheduler configurations.

\subsection{Stable Marriage}
\label{sec:sm}

\subsubsection{Formulation}
The classical Stable Marriage problem matches two equally sized populations (traditionally called \emph{men} and \emph{women}) such that no blocking pair exists. The standard sequential algorithm is Gale--Shapley: men propose in rounds, women hold the best offer received so far, and the process terminates when every man is either matched or has exhausted his list. Parallelising Gale--Shapley is non-trivial because a proposal that displaces one man can cascade into further proposals, creating serial dependencies.

The LLP formulation sidesteps round-based coordination. Each man \(m\) tracks a proposal index into his preference list; the lattice is ordered pointwise on these indices, so proposals only move forward and the state is monotone. A man is \emph{forbidden} if the woman he is currently targeting prefers another man who also currently targets her---in other words, there exists a blocking pair involving \(m\). The advance step increments \(m\)'s proposal index toward the next woman on his list and enqueues any men who may have become forbidden as a result (those currently targeting the woman \(m\) just proposed to). Because indices never decrease, the global state converges monotonically to the unique man-optimal stable matching, which is the same fixed point that Gale--Shapley reaches.

The key difference from Gale--Shapley is scheduling granularity. Gale--Shapley processes one proposal per round (or one per man per round in the parallel variant); the LLP solver can advance any forbidden man at any time, so multiple independent proposals proceed concurrently without waiting for a round boundary. The cost is the same as in the graph algorithms: each proposal requires an atomic compare-and-swap on the proposal index, and the forbiddenness check must read the current matches of all competitors for the target woman.

\subsubsection{Implementation}
LLP-FW keeps the immutable preference lists, per-participant proposal indices, and the worklist scheduler separate. Listing~\ref{lst:sm-struct} sketches the data representation, while Listing~\ref{lst:sm-impl} outlines the \texttt{LatticeLinearProblem} implementation used by the runtime. Two aspects of the implementation deserve attention. First, the forbiddenness check for man \(m\) iterates over all men to find competitors currently targeting the same woman---this is an \(O(n)\) scan per predicate evaluation, which becomes the dominant cost as \(n\) grows. Second, a successful proposal enqueues the affected men (those who may now be displaced) back onto the worklist, creating a cascade pattern similar to the neighbour-push in SSSP. The PTCF scheduler groups these cascading re-evaluations into per-thread chunks, amortising the overhead of worklist management.

\begin{figure}[htb]
  \begin{minipage}{\linewidth}
  \begin{lstlisting}[language=rust,caption=Stable Marriage data representation,label={lst:sm-struct},basicstyle=\ttfamily\scriptsize]
pub struct LlpStableMarriage {
  mprefs: Vec<Vec<usize>>, // Men's preferences
  wprefs: Vec<Vec<usize>>, // Women's preferences
}

impl LlpStableMarriage {
  // Returns true if person i prefers their current match over person j
  fn prefers_current_match(&self,
    global_state: &GlobalState<usize>,
    i: usize,
    j: usize) -> bool;
   
  // Compare-and-swap the proposal index,
  // returning the actual index if CAS fails
  fn compare_and_swap(&self,
    global_state: &GlobalState<usize>,
    index: usize,
    new_index: usize) -> Result<(), usize>;
       
  // Returns the set of men who might become forbidden
  // due to a new proposal
  fn get_affected_men(&self,
    proposer: usize,
    woman: usize) -> Vec<usize>;
}
  \end{lstlisting}
  \end{minipage}
\end{figure}
   
\begin{figure}[htb]
\begin{minipage}{\linewidth}
\begin{lstlisting}[language=rust,caption=Stable Marriage as an LLP,label={lst:sm-impl},basicstyle=\ttfamily\scriptsize]
impl LatticeLinearProblem<usize> for LlpStableMarriage {
  fn is_forbidden(&self, global_state: &GlobalState<usize>, index: usize) -> bool {
    let current_woman = self.mprefs[index][global_state.read(index)];
     
    // Check if any man has a better proposal from current_woman
    (0..self.mprefs.len()).any(|i| 
      i != index && 
      self.mprefs[i][global_state.read(i)] == current_woman &&
      self.prefers_current_match(global_state, i, index)
    )
  }
   
  fn advance<W: Worklist>(
    &self,
    global_state: &GlobalState<usize>,
    index: usize,
    worklist: &W,
  ) -> bool {
    let mut current_idx = global_state.read(index);
     
    loop {
      let new_idx = current_idx + 1;
      let new_woman = self.mprefs[index][new_idx];
   
      match self.compare_and_swap(global_state, index, new_idx) {
        Ok(_) => {
          worklist.push_all(self.get_affected_men(index, new_woman));
          return true;
        }
        Err(actual) => current_idx = actual,
      }
    }
  }
}
\end{lstlisting}
\end{minipage}
\end{figure}

\subsubsection{Setup}
We evaluate five balanced Stable Marriage instances with randomly generated total-order preferences, ranging from 10 to 10\,000 participants per side. Table~\ref{tab:sm-setup} summarises the workloads. The baseline is a parallel Gale--Shapley implementation that batches proposals per round and uses a shared concurrent queue; a sequential (non-parallel) Gale--Shapley is also included for single-thread reference. The LLP solver uses the per-thread chunk FIFO (PTCF) worklist. Multi-thread data covers 1--32 threads across all five instances.

\begin{table}[htb]
  \centering
  \small
  \begin{tabular}{lrrr}
    \toprule
    \textbf{Workload} & \textbf{Men} & \textbf{Women} & \textbf{Avg prefs} \\
    \midrule
    Sm 10    & 10    & 10    & 10    \\
    Sm 100   & 100   & 100   & 100   \\
    Sm 1000  & 1000  & 1000  & 1000  \\
    Sm 5000  & 5000  & 5000  & 5000  \\
    Sm 10000 & 10000 & 10000 & 10000 \\
    \bottomrule
  \end{tabular}
  \caption{Stable Marriage workloads used in the evaluation. Preference lists are balanced and dense for every instance.}
  \label{tab:sm-setup}
\end{table}

\subsubsection{Results and Discussion}
\label{sec:sm-results}

Stable Marriage differs from the graph algorithms in two important ways: the problem size is measured in participants rather than vertices and edges, and the forbiddenness check is \(O(n)\) per evaluation rather than \(O(\text{degree})\). These differences change the overhead and scaling profiles, and the results below reflect that.

\smallskip
\noindent\textbf{Single-thread overhead.}
At one thread, the LLP solver PTCF is comparable to sequential Gale-Shapley for medium-sized instances. For \textsc{Sm 100}, PTCF takes \(0.292\)~ms, whereas the sequential Gale-Shapley takes \(0.032\)~ms, an overhead of roughly \(9\times\). However, for \textsc{Sm 1000}, PTCF takes \(6.262\)~ms against the sequential Gale-Shapley's \(6.921\)~ms, which is effectively parity (PTCF is actually \(10\%\) faster). For larger instances, the overhead reappears: on \textsc{Sm 10000} PTCF takes \(2{,}092\)~ms against the sequential baseline's \(960\)~ms, a \(2.2\times\) overhead. This suggests that while per-operation atomic costs are amortised at intermediate sizes, the \(O(n)\) forbiddenness scan dominates at the largest scale for single-threaded execution.

\smallskip
\noindent\textbf{Baseline comparison: speedup grows with instance size.}
Table~\ref{tab:sm-llp-baseline} compares the 32-thread LLP solver against the parallel Gale--Shapley baseline. The headline result is the growth of the speedup with instance size: from \(2.27\times\) on \textsc{Sm 100} to \(245.88\times\) on \textsc{Sm 1000}, then \(148.03\times\) on \textsc{Sm 5000} and \(109.43\times\) on \textsc{Sm 10000}.

\begin{table}[htb]
  \centering
  \small
  \begin{tabular}{lrrr}
    \toprule
    \textbf{Workload} & \textbf{LLP-PTCF (ms)} & \textbf{Baseline (ms)} & \textbf{Speedup} \\
    \midrule
    Sm 10    & 1.202  & 0.927     & $0.77\times$ \\
    Sm 100   & 1.231  & 2.793     & $2.27\times$ \\
    Sm 1000  & 1.791  & 440.380   & $245.88\times$ \\
    Sm 5000  & 17.564 & 2{,}599.963 & $148.03\times$ \\
    Sm 10000 & 74.960 & 8{,}202.645 & $109.43\times$ \\
    \bottomrule
  \end{tabular}
  \caption{32-thread LLP-PTCF versus the parallel Gale--Shapley baseline across all instance sizes.}
  \label{tab:sm-llp-baseline}
\end{table}

The peak speedup occurs on \textsc{Sm 1000} because the parallel Gale--Shapley baseline anti-scales most severely at this size: at 32 threads it runs in \(440\)~ms, nearly \(2\times\) slower than its own single-thread variant (\(238\)~ms). The round-based structure forces all proposals in a round to finish before the next can begin, and each displaced man must wait for the next round to re-propose. In the LLP solver, a displaced man is immediately re-enqueued and can re-propose in the same scheduling epoch.

On the larger instances (\textsc{Sm 5000} and \textsc{Sm 10000}), the speedup is somewhat lower because the \(O(n)\) forbiddenness check increases the absolute cost of each LLP advance. Nevertheless, the baseline's anti-scaling means that the LLP advantage remains over two orders of magnitude on \textsc{Sm 5000} and over one hundred fold on \textsc{Sm 10000}.

\smallskip

\noindent\textbf{Scaling behaviour on \textsc{Sm 10000}.}

Figure~\ref{fig:sm-scaling} shows the scaling performance on the largest instance. The LLP solver (PTCF) scales from \(2{,}092\)~ms at one thread to \(74.96\)~ms at 32 threads, a self-speedup of \(27.9\times\)---near-linear scaling. In contrast, the parallel baseline \emph{anti-scales}: it goes from \(5{,}728\)~ms at one thread to \(8{,}203\)~ms at 32 threads, a \(1.4\times\) slowdown. This divergence means the LLP speedup over the baseline grows from \(2.7\times\) at one thread to \(109\times\) at 32 threads.

\begin{figure}[htb]
  \centering
  \includegraphics[width=0.48\textwidth]{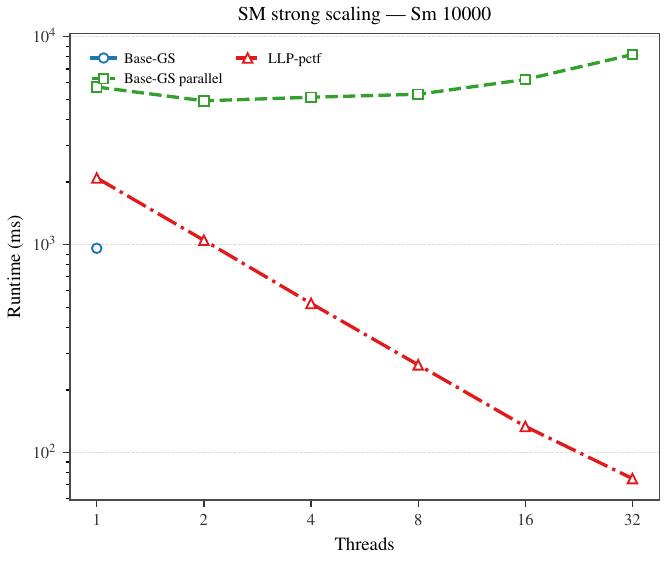}
  \hfill
  \includegraphics[width=0.48\textwidth]{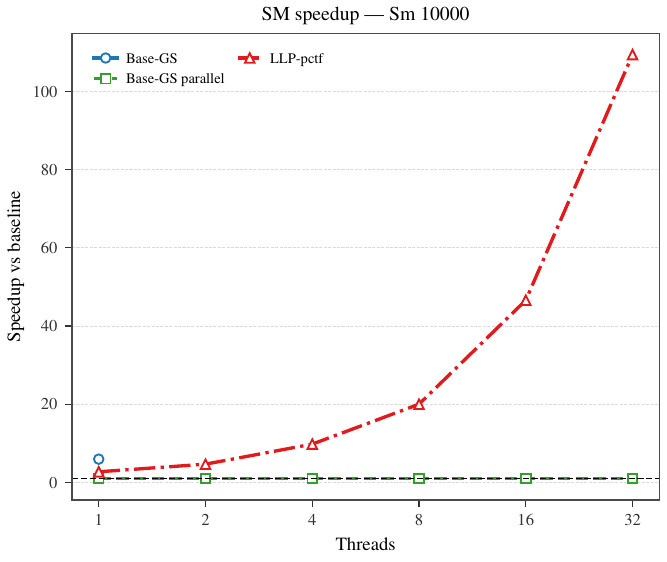}
  \caption{Stable Marriage runtime (left) and speedup over parallel Gale--Shapley (right) on \textsc{Sm 10000}.}
  \label{fig:sm-scaling}
\end{figure}

\smallskip
\noindent\textbf{Scheduler policy: cyclic traversal is catastrophic at scale.}
Table~\ref{tab:sm-worklists} compares single-thread runtimes for PTCF and the cyclic traversal across the smaller instance sizes (the cyclic solver does not terminate within a reasonable time on \textsc{Sm 5000} and above). On \textsc{Sm 10} (10 participants) the cyclic variant is actually the fastest option (\(0.004\)~ms). On \textsc{Sm 100} the balance shifts: a full scan now touches 100 participants, most of whom are not forbidden, and PTCF's targeted re-evaluation is over \(17\times\) faster (\(0.292\)~ms vs \(5.250\)~ms). On \textsc{Sm 1000} the gap reaches three orders of magnitude (\(6.262\)~ms vs \(11{,}916\)~ms). The lesson is that targeted worklist management is essential for Stable Marriage at scale: a naive full-scan approach turns an \(O(n^2)\) problem into something closer to \(O(n^3)\) or worse in practice.

\begin{table}[htb]
  \centering
  \small
  \begin{tabular}{lrr}
    \toprule
    \textbf{Workload} & \textbf{LLP-PTCF (ms)} & \textbf{LLP-T-Cyclic (ms)} \\
    \midrule
    Sm 10   & 0.145  & 0.004 \\
    Sm 100  & 0.292  & 5.250 \\
    Sm 1000 & 6.262  & 11{,}916 \\
    \bottomrule
  \end{tabular}
  \caption{Single-thread (1-thread) performance of two LLP schedulers. The cyclic traversal becomes orders of magnitude slower once preference lists exceed a few dozen entries.}
  \label{tab:sm-worklists}
\end{table}

\smallskip
\noindent\textbf{Comparison with graph algorithms.}
Compared to the SSSP and BFS results, the Stable Marriage problem demonstrates a different scaling profile for the LLP abstraction. For the graph problems, the benefits arise from avoiding global barriers during frontier expansion. For Stable Marriage, the benefits arise from avoiding global rounds during cascading re-proposals. The LLP solver achieves near-linear self-speedup (\(27.9\times\) at 32 threads on \textsc{Sm 10000}), whereas the parallel Gale--Shapley baseline exhibits consistent anti-scaling. The \(O(n)\) cost of the forbiddenness check is one limitation: an optimised implementation that maintains per-woman competitor sets could reduce this to \(O(1)\) amortised cost, likely extending the LLP advantage further.

\smallskip
\noindent\textbf{Summary.}
Stable Marriage is the strongest result in this paper in terms of baseline-relative speedup. The \(246\times\) gain on \textsc{Sm 1000} at 32 threads reflects both the LLP solver's ability to overlap cascading proposals and the parallel baseline's inability to do so. On the largest instance (\textsc{Sm 10000}), LLP-PTCF achieves \(109\times\) over the baseline at 32 threads with near-linear self-speedup (\(27.9\times\)), while the baseline anti-scales. Scheduler choice remains critical: the cyclic traversal degrades by three orders of magnitude on 1\,000-participant instances, while PTCF's targeted re-evaluation keeps runtime proportional to actual work. These results demonstrate that the LLP abstraction applies naturally to combinatorial matching problems where cascading updates create serial dependencies that round-based parallelism cannot exploit.

After graph and matching workloads, we now apply the same analysis template to four additional domains (scheduling, reductions, closure, and dynamic programming) to test how robustly the LLP runtime transfers beyond traversal-heavy graphs.

\subsection{Job Scheduling}
\label{sec:job-scheduling}

Job scheduling with precedence constraints arises in build systems, manufacturing cells, and distributed execution frameworks. Given a Directed Acyclic Graph (DAG) whose vertices represent jobs and edges encode prerequisites, the objective is to compute each job's earliest completion time. The standard sequential approach is a topological sort (Kahn's algorithm): process jobs in dependency order, computing each completion time from the maximum of its predecessors. The parallel variant processes one topological \emph{level} per round, advancing all jobs in a level simultaneously before starting the next. This level-synchronous structure is simple but conservative: it waits for every job in a level to finish before releasing successors, even when some successors could start earlier because their actual predecessors completed in a prior level.

The LLP formulation removes this barrier. A job becomes forbidden as soon as all its predecessors have committed their completion times, regardless of what other jobs at the same topological depth are doing. The solver advances forbidden jobs immediately and enqueues their successors, so independent chains in the DAG proceed concurrently without waiting for the widest level to drain.

\subsubsection{Formulation}
\label{sec:job-formulation}

\noindent\textbf{Problem statement.} Let \(G = (V,E)\) be a DAG with \(|V| = n\). Each job \(j \in V\) has processing time \(t_j \in \mathbb{R}_{\ge 0}\) and prerequisite set \(P(j) = \{ i \mid (i,j) \in E \}\). We seek the vector of earliest completion times \(\C[j]\) that satisfies all precedence constraints, i.e.,
\[
\C[j] \ge \max_{p \in P(j)} \C[p] + t_j \quad \text{for every } j \in V,
\]
where the maximum over an empty predecessor set is zero.

\noindent\textbf{Lattice and forbidden predicate.} The global state \(\C \in \mathbb{R}_{\ge 0}^n\) forms a lattice under component-wise order. We initialise \(\C[j] = t_j\), reflecting that jobs cannot finish before their own execution time. A job becomes \emph{ready} once all predecessors are fixed:
\[
\textsc{ready}(j) \equiv \forall p \in P(j)\; \textsc{fixed}(p).
\]
The lattice-linear predicate \(B(\C)\) checks whether every ready job already respects its precedence bound. Violations manifest as forbidden indices
\[
\forbidden(\C, j, B) \equiv \textsc{ready}(j) \wedge \C[j] < \max_{p \in P(j)} \C[p] + t_j.
\]
Advancing a forbidden job raises \(\C[j]\) to the tight bound while preserving monotonicity. Because successors only become ready after their last predecessor is fixed, updates propagate along the DAG without reintroducing stale states.

\subsubsection{Implementation}
\label{sec:job-implementation}

The concrete solver (\texttt{LlpJobScheduling} in \texttt{src/algorithms/job.rs}) maintains two auxiliary arrays alongside the solution vector: \texttt{max\_from\_parents} caches the largest completion published by any predecessor, and \texttt{remaining\_prereqs} tracks how many dependencies are still outstanding. Sources enqueue themselves during \texttt{initial\_states\_to\_process} with a priority equal to their predicted completion time. The combination ensures that threads only pop jobs whose parents are finished and that they receive a priority hint for downstream scheduling heuristics.

Listing~\ref{lst:job-advance} sketches the \texttt{advance} routine. It (i) raises the job's completion time through a compare-and-swap loop, (ii) fixes the job exactly once via the lock-free fixed-state bit vector (see Section~\ref{sec:cm-memoization}), and (iii) publishes readiness to each successor using a \texttt{fetch\_max} on the parent-maximum cache followed by a \texttt{fetch\_sub} on the prerequisite counter---when the counter reaches zero, the child is unlocked. Ready children are reinserted into the worklist with a priority equal to their own earliest completion estimate, biasing the solver toward keeping the critical path warm. The memoization through fixed states is especially effective here: once a job's completion time is committed, it is never revisited, so the fixed bit eliminates redundant forbiddenness checks on settled jobs entirely.

\begin{figure}[htb]
\begin{minipage}{\linewidth}
\begin{lstlisting}[language=rust,caption=Core of the LLP job scheduling advance step.,label={lst:job-advance},basicstyle=\ttfamily\scriptsize]
impl LatticeLinearProblem<usize, usize, JobState> for LlpJobScheduling {
    fn advance<W: Worklist>(
        &self,
        global_state: &GlobalState<usize, JobState>,
        index: usize,
        worklist: &W,
    ) -> bool {
        let job = &self.jobs[index];
        let parent_max = global_state
            .additional_state
            .max_from_parents[index]
            .load(Acquire);
        let desired_completion = parent_max.saturating_add(job.time);

        let mut stored = global_state.solution_vector[index].load(Acquire);
        let mut updated = false;
        while desired_completion > stored {
            match global_state.solution_vector[index].compare_exchange_weak(
                stored,
                desired_completion,
                Release,
                Acquire,
            ) {
                Ok(_) => {
                    stored = desired_completion;
                    updated = true;
                    break;
                }
                Err(actual) => stored = actual,
            }
        }

        let was_fixed = global_state.fixed_vector.is_fixed(index);
        if !was_fixed {
            global_state.fixed_vector.set_fixed(index);
            let completion = stored;
            let mut ready = Vec::new();
            for &child in &job.postreq {
                global_state.additional_state.max_from_parents[child]
                    .fetch_max(completion, AcqRel);
                if global_state.additional_state.remaining_prereqs[child]
                    .fetch_sub(1, AcqRel)
                    == 1
                {
                    let priority = global_state
                        .additional_state
                        .max_from_parents[child]
                        .load(Acquire)
                        .saturating_add(self.jobs[child].time);
                    ready.push((child, priority));
                }
            }
            if !ready.is_empty() {
                worklist.push_all(ready.into_iter());
            }
        }

        updated
    }
}
\end{lstlisting}
\end{minipage}
\end{figure}

\subsubsection{Setup}
\label{sec:job-setup}

We generate synthetic DAGs ranging from ten to ten thousand jobs using the script in \texttt{benches/data/job\_scheduling\_gen.py}. Each instance samples job durations uniformly in \([1,80]\) and adds a directed edge from every earlier job with probability \(0.2\), yielding moderately wide but acyclic dependency graphs. We compare against two baselines: a sequential Kahn-style topological pass (\texttt{baseline - topo sort}) that computes completion times in strict dependency order, and its level-synchronous parallel variant (\texttt{baseline - parallel topo levels}) that processes one topological frontier per round using a shared concurrent queue. All solvers run on the platform described in Section~\ref{sec:cm-eval} with thread counts from 1 to 32; we report per-dataset medians over repeated Criterion samples.

\begin{table}[htb]
    \centering
    \small
    \begin{tabular}{lr}
        \toprule
        \textbf{Dataset} & \textbf{Jobs} \\
        \midrule
        \texttt{Job 10} & 10 \\
        \texttt{Job 100} & 100 \\
        \texttt{Job 300} & 300 \\
        \texttt{Job 500} & 500 \\
        \texttt{Job 1000} & 1\,000 \\
        \texttt{Job 10000} & 10\,000 \\
        \bottomrule
    \end{tabular}
    \caption{Synthetic job scheduling benchmarks. All DAGs draw durations uniformly from \([1,80]\) and include each potential prerequisite with probability \(0.2\).}
    \label{tab:job-setup}
\end{table}

\subsubsection{Results and Discussion}
\label{sec:job-eval}

Job scheduling offers a qualitatively different performance story from the graph algorithms. Because the memoization optimisation prevents any settled job from being re-evaluated, the LLP solver avoids the repeated scans that dominate the topological baselines on deep DAGs. The consequence is that LLP is already dramatically faster at \emph{one thread} on large instances, and multi-threading widens the gap further.

\smallskip
\noindent\textbf{Single-thread behaviour: parity on small DAGs, orders-of-magnitude win on large ones.}
On instances with up to 500 jobs, the best single-thread LLP solver (typically \emph{oset} or \emph{bag}) and the sequential topological sort finish within a few percent of each other---both around \(13\)--\(15\)~ms. The DAGs are small enough that the total work is comparable regardless of traversal strategy. The picture changes sharply on \texttt{Job 10000}: the sequential topological baseline requires \(261.7\)~ms, while the best LLP variant (\emph{oset}) finishes in \(14.7\)~ms---a \(17.8\times\) speedup with a single thread. The explanation lies in how each approach handles the deep dependency chains that the random-DAG generator produces at this scale. The topological sort recomputes ready sets from scratch on every pass; the LLP solver computes each job's completion time once, marks it as fixed, and never revisits it. Successors read the cached parent maximum directly, so the total number of operations scales with the number of edges rather than with repeated vertex scans.

\smallskip
\noindent\textbf{Scaling on \texttt{Job 10000}.}
Figure~\ref{fig:job-scaling} reports runtimes and speedups for the largest workload. The level-synchronous baseline shows essentially flat performance across thread counts: it starts at \(372\)~ms with one thread and fluctuates between \(411\)~ms and \(463\)~ms through 32 threads. The reason is that each round must drain completely before the next topological level is released, and on a deep DAG with narrow bottleneck levels, most threads sit idle waiting for the few jobs on the critical path to finish. The LLP solver faces no such barrier. At one thread it already runs in \(14.7\)~ms; adding threads increases it modestly to \(34.3\)~ms at 32 threads (the slight slowdown reflects coordination overhead exceeding the available parallelism on this particular DAG). The baseline-relative speedup is therefore dominated by the single-thread gap: \(13.5\times\) at 32 threads, and the absolute best point is at one thread (\(17.8\times\)).

This is an unusual scaling profile compared with SSSP and BFS, where LLP started slower and overtook the baseline as threads grew. Here, the algorithmic advantage (memoization + demand-driven successor propagation) already dominates at one thread, and additional threads provide only marginal further improvement because the 10\,000-job DAG does not expose enough independent work to keep 32 cores busy once the critical path is the bottleneck.

\begin{figure}[htb]
    \centering
    \includegraphics[width=0.48\textwidth]{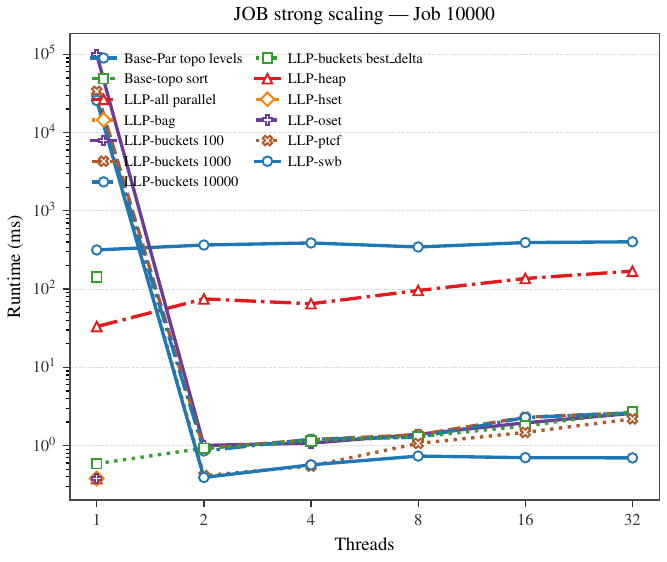}\hfill
    \includegraphics[width=0.48\textwidth]{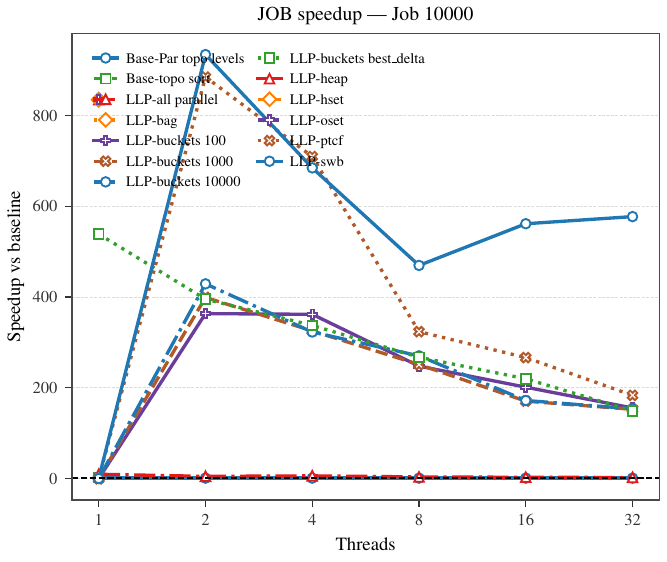}
    \caption{Job scheduling strong scaling (left) and speedup versus the parallel topological baseline (right) on \texttt{Job 10000}.}
    \label{fig:job-scaling}
\end{figure}

On intermediate DAG sizes the parallelism balance is different. On \texttt{Job 1000} at four threads, LLP finishes in \(16.5\)~ms versus \(29.4\)~ms for the baseline (\(1.78\times\)). At 32 threads the gap narrows to \(1.40\times\) (\(31.6\)~ms vs \(44.2\)~ms), consistent with the smaller DAG exhausting available parallel slack sooner.

\smallskip
\noindent\textbf{Worklist sensitivity.}
Figure~\ref{fig:job-worklists} compares scheduler policies on \texttt{Job 10000} at 32 threads. Unlike the graph algorithms, where PTWB dominated by a wide margin, the job-scheduling worklists cluster tightly: PTCF at \(34.3\)~ms, SWB at \(35.9\)~ms, all-parallel at \(37.3\)~ms, and several bucketed variants between \(38\)~ms and \(39\)~ms. The gap between the best and worst LLP policy is only about \(15\%\), compared with the order-of-magnitude gaps observed in SSSP and BFS.

The reason is that the memoization optimisation changes the nature of the frontier. In SSSP, the same vertex can be relaxed multiple times, so prioritising recently improved vertices (PTWB's recency bias) eliminates a large volume of redundant work. In job scheduling, each job is advanced exactly once and then fixed; there is no redundant work to avoid. The worklist's role is reduced to ordering the sequence in which ready jobs are discovered, which matters less when every job will be processed exactly once regardless of order.

\begin{figure}[htb]
    \centering
    \includegraphics[width=0.48\textwidth]{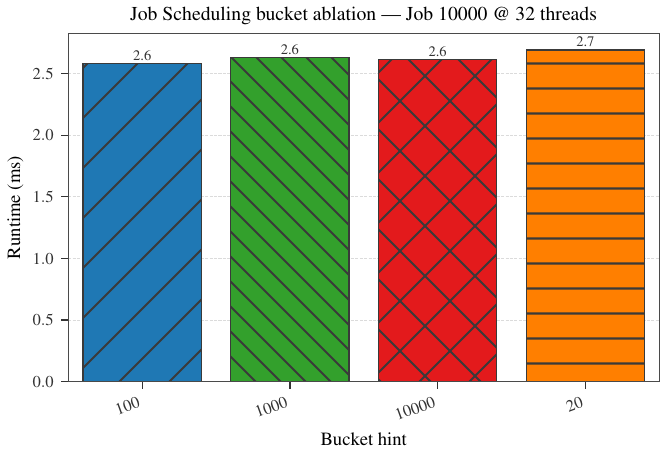}\hfill
    \includegraphics[width=0.48\textwidth]{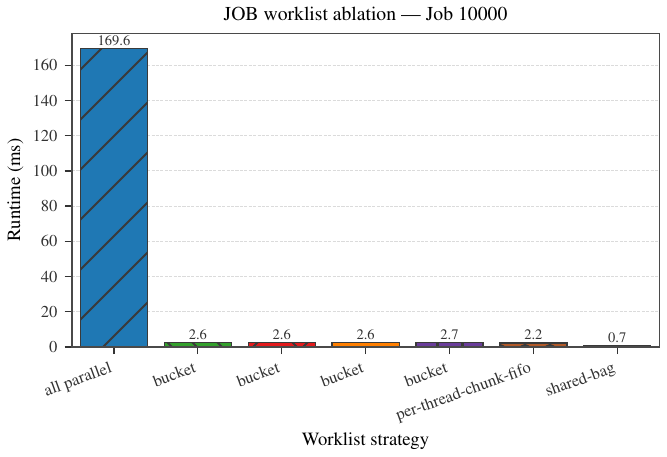}
    \caption{Scheduler ablation on \texttt{Job 10000} at 32 threads. Left: bucket sizes versus runtime; right: active frontier size over time.}
    \label{fig:job-worklists}
\end{figure}

\smallskip
\noindent\textbf{Comparison with graph algorithms and summary.}
Job scheduling occupies a distinct niche in the LLP portfolio. In SSSP and BFS, the LLP overhead at low thread counts is the main cost, and gains come from better parallel scaling. In job scheduling, the gains come primarily from the \emph{algorithmic} advantage of memoization and demand-driven propagation, which eliminates redundant work even sequentially. Multi-threading adds a modest further benefit on DAGs with sufficient width, but the critical-path depth limits the available parallelism on our synthetic instances.

For a practitioner, this means LLP is especially attractive for scheduling workloads where the dependency graph is deep and the baseline involves repeated global scans. The worklist choice matters less than in graph traversal---any reasonable policy delivers similar performance---so the implementation can default to a simple scheduler without extensive tuning.

\subsection{Parallel Reduction}
\label{sec:reduce}

\subsubsection{Formulation}
Parallel reduction collapses an input vector \(x \in \mathbb{R}^n\) into a single scalar by repeatedly applying an associative operator (we use 64-bit integer addition). The standard parallel approach builds a binary tree of partial sums and combines them bottom-up; each level of the tree depends only on the level below, so all combines at the same depth can proceed in parallel.

In the LLP view, each node of the reduction tree is a lattice coordinate. A node becomes \emph{forbidden} when both of its children have published their partial sums but the node itself has not yet combined them. Advancing the state performs the combine via an atomic update and exposes the parent as a candidate for further reduction. The lattice order is pointwise: once a partial sum is published it never reverts. This formulation maps the reduction tree directly onto the LLP runtime, but it also introduces a cost that the binary-tree baseline avoids: every combine step requires an atomic read-modify-write rather than a plain memory store.

We include parallel reduction deliberately as a stress test for the LLP abstraction. Reduction is the most regular, most memory-bandwidth-sensitive workload in our benchmark suite, with no irregular frontier structure for the LLP scheduler to exploit. If the framework performs well here, it would suggest broad applicability; if it does not, the gap will reveal the cost of generality on workloads that do not benefit from demand-driven scheduling.

\subsubsection{Implementation}
The LLP implementation assigns each tile of the reduction tree to a work-item containing a pointer to its accumulator and a dependency counter tracking outstanding children. Tiles publish their partial sums through \texttt{AtomicU64} updates, at which point the dependency counter of their parent is decremented. When the counter reaches zero the parent becomes forbidden and is pushed onto the active worklist. PTWB and SWB favour locality by letting a thread drain the subtree rooted at its current tile; AllPar behaves more like a flat tree and maximises available parallelism at the cost of extra cross-core traffic. The baseline is a multi-threaded binary-tree reduction optimised for cache reuse, using plain (non-atomic) stores within each thread's local subtree and a single synchronisation point at the root.

\subsubsection{Setup}
We generate uniformly distributed 64-bit inputs and evaluate four power-of-two sizes that stress different cache hierarchy levels. Table~\ref{tab:reduce-setup} summarises the dataset suite. All runs use the platform described in Section~\ref{sec:cm-eval} with thread counts from 1 to 32; reported values are medians over repeated Criterion samples.

\begin{table}[htb]
    \centering
    \small
    \begin{tabular}{lcl}
        \toprule
        \textbf{Dataset} & \textbf{Elements} & \textbf{Notes} \\
        \midrule
        \texttt{Reduce 4096} & $4{,}096$ & Fits in private L1 caches \\
        \texttt{Reduce 65536} & $65{,}536$ & Resident in shared L2 \\
        \texttt{Reduce 1048576} & $1{,}048{,}576$ & Spans the LLC \\
        \texttt{Reduce 16777216} & $16{,}777{,}216$ & Bandwidth-bound, spills to DRAM \\
        \bottomrule
    \end{tabular}
    \caption{Parallel reduction benchmarks. All runs reduce 64-bit integers with associative addition and reuse the same random seed across solvers.}
    \label{tab:reduce-setup}
\end{table}

\subsubsection{Results and Discussion}
\label{sec:reduce-results}

Parallel reduction is the weakest workload for LLP in this paper, and the results illustrate where the abstraction's overhead outweighs its benefits. We present them honestly because the failure mode is as informative as the successes.

\smallskip
\noindent\textbf{Single-thread overhead.}
At one thread, the LLP solver pays a substantial penalty on every combine step. On \texttt{Reduce 4096} the best LLP variant (\emph{bag}) takes \(0.053\)~ms versus \(0.006\)~ms for the sequential baseline---roughly \(9\times\) slower. On \texttt{Reduce 1M} the gap is similar: \(16.2\)~ms versus \(1.72\)~ms (\(9.4\times\)). On \texttt{Reduce 16M} the ratio shrinks to \(2.7\times\) (\(285\)~ms vs \(105\)~ms) because the combine loop now saturates memory bandwidth and the atomic overhead becomes a smaller fraction of total time. Unlike SSSP or job scheduling, where the LLP predicate check does useful filtering work, the reduction predicate is trivial (``are both children ready?''), so the compare-and-swap cost is pure overhead with no compensating benefit.

\smallskip
\noindent\textbf{Baseline comparison at 32 threads.}
Table~\ref{tab:reduce-llp-baseline} reports the best LLP configuration versus the parallel baseline at 32 threads across all four input sizes. On the two smallest inputs (\(4{,}096\) and \(65{,}536\) elements), both approaches finish in the same ballpark because the total work fits in cache and runtime is dominated by thread-infrastructure overhead rather than actual combines. On \texttt{Reduce 1M} the gap opens: the baseline reaches \(26.6\)~ms while PTWB requires \(45.4\)~ms (\(0.59\times\)). On \texttt{Reduce 16M} the deficit becomes decisive: \(96.5\)~ms versus \(417.3\)~ms (\(0.23\times\)). The throughput column tells the same story---the baseline sustains \(174\) million elements per second on the largest input, while LLP manages only \(40\) million.

\begin{table}[htb]
    \centering
    \resizebox{\linewidth}{!}{%
    \begin{tabular}{lcccccc}
        \toprule
        \textbf{Dataset} & \textbf{Elements} & \textbf{Baseline (ms)} & \textbf{Baseline (M elems/s)} & \textbf{Best LLP} & \textbf{LLP (ms)} & \textbf{LLP (M elems/s)} \\
        \midrule
        \texttt{Reduce 4096}  & $4{,}096$       & 18.22 & 0.22 & AllPar & 18.22 & 0.22 \\
        \texttt{Reduce 65536} & $65{,}536$      & 20.32 & 3.23 & PTWB   & 18.84 & 3.48 \\
        \texttt{Reduce 1048576}  & $1{,}048{,}576$ & 26.58 & 39.45 & PTWB   & 45.35 & 23.12 \\
        \texttt{Reduce 16777216} & $16{,}777{,}216$& 96.50 & 173.86 & PTWB  & 417.29 & 40.21 \\
        \bottomrule
    \end{tabular}}
    \caption{Thirty-two-thread reduction results. Throughput uses millions of elements processed per second; the best LLP solver is selected per dataset.}
    \label{tab:reduce-llp-baseline}
\end{table}

\smallskip
\noindent\textbf{Scaling behaviour.}
Figure~\ref{fig:reduce-scaling} plots runtime and speedup on \texttt{Reduce 1M}. The sequential baseline finishes in \(1.72\)~ms; the parallel baseline variant runs in \(26.6\)~ms at 32 threads, slower due to thread-pool overhead that outweighs the benefit of parallelism on this cache-friendly workload. LLP-PTWB, by contrast, shows essentially flat performance across thread counts: it starts at \(54\)~ms with one thread and stays near \(45\)~ms through 32 threads. The LLP solver cannot exploit additional cores effectively because the reduction tree has a logarithmic critical path---once the bottom levels are drained, the remaining combines form a narrow serial chain that no amount of parallelism can accelerate, and the atomic overhead on each combine step prevents the solver from keeping up with the baseline's plain-store approach on that chain.

\begin{figure}[htb]
    \centering
    \includegraphics[width=0.48\textwidth]{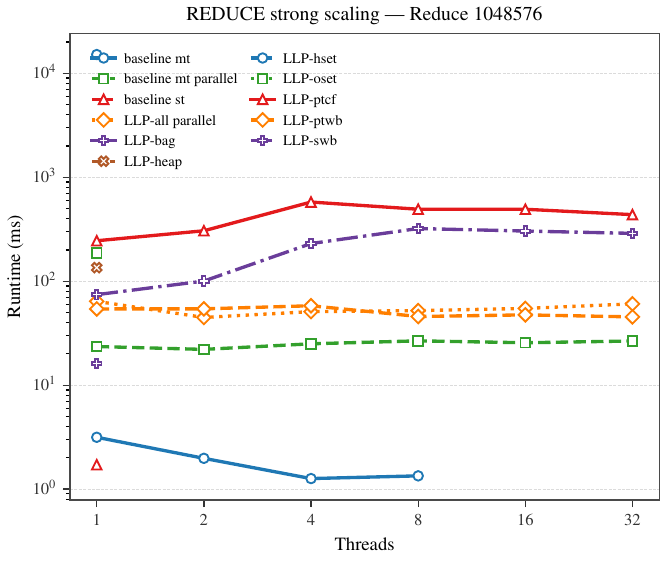}\hfill
    \includegraphics[width=0.48\textwidth]{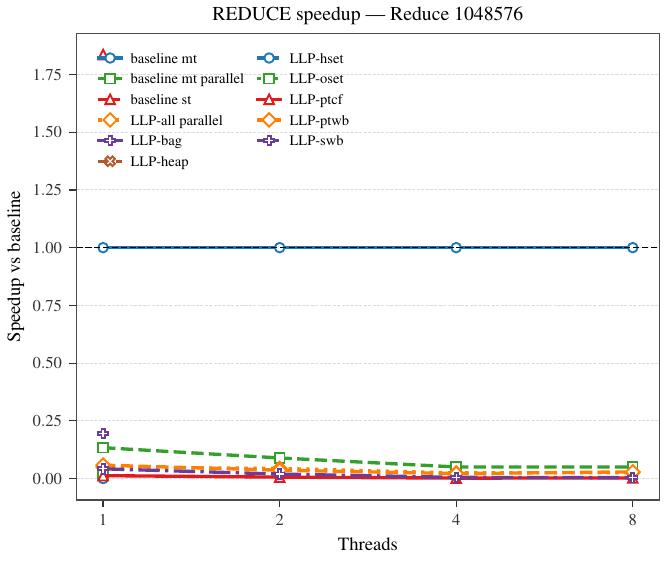}
    \caption{Strong scaling (left) and speedup over the baseline (right) for \texttt{Reduce 1M}.}
    \label{fig:reduce-scaling}
\end{figure}

\smallskip
\noindent\textbf{Worklist sensitivity.}
Figure~\ref{fig:reduce-worklists} compares worklist policies on \texttt{Reduce 1M} at 32 threads. PTWB at \(45.3\)~ms is the best LLP variant, followed by AllPar at \(60.5\)~ms, SWB at \(288.6\)~ms, and PTCF at \(436.9\)~ms. The \(6\times\) gap between PTWB and SWB shows that spatial locality still matters even in a regular workload: PTWB assigns each thread a contiguous subtree, keeping partial sums in local caches, while SWB distributes tiles across cores and incurs coherence traffic on every combine. However, even the best worklist policy cannot close the gap with the baseline, because the fundamental bottleneck is per-combine atomic overhead rather than scheduling strategy.

\begin{figure}[htb]
    \centering
    \includegraphics[width=0.48\textwidth]{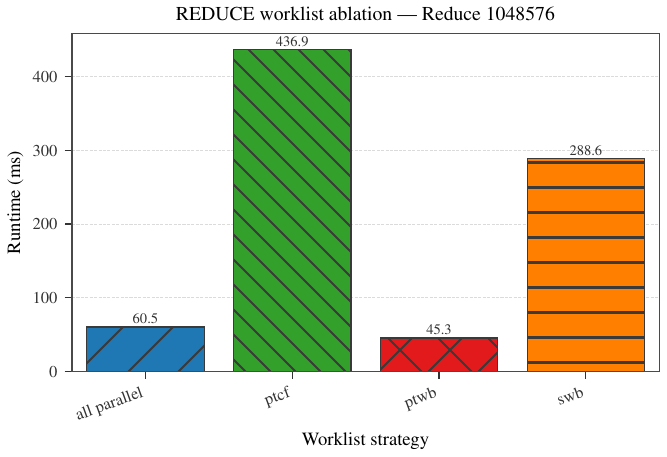}\hfill
    \includegraphics[width=0.48\textwidth]{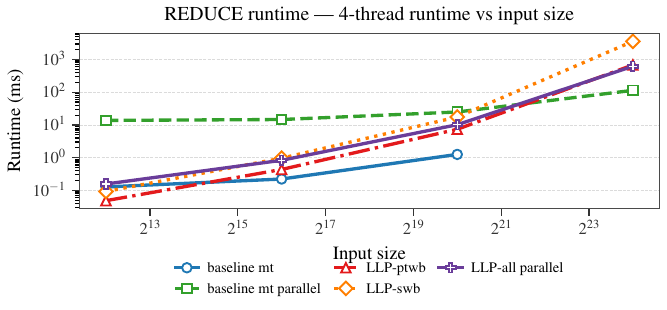}
    \caption{Worklist diagnostics on \texttt{Reduce 1M} at 32 threads (left) and runtime growth with input size at four threads (right).}
    \label{fig:reduce-worklists}
\end{figure}

\smallskip
\noindent\textbf{Why reduction is a poor fit for LLP.}
The previous algorithms all had a property that made LLP effective: an irregular frontier whose shape varied with the input, creating opportunities for demand-driven scheduling to concentrate work where it was most needed. Reduction has none of this. The reduction tree is perfectly regular, every node does the same amount of work, and the dependency structure is fixed at compile time. The LLP abstraction still produces a correct result, but its coordination machinery---atomic predicate checks, compare-and-swap combines, worklist management---adds overhead on every operation without any compensating reduction in redundant work. The baseline, by contrast, can use plain stores within each thread's subtree and synchronise only once at the root.

\smallskip
\noindent\textbf{Summary.}
Parallel reduction serves as an informative lower bound for the LLP approach. On cache-resident inputs the overhead is small enough that both approaches finish in comparable time; once the working set exceeds the LLC, the per-element atomic cost dominates and the baseline wins by \(4\)--\(5\times\). The lesson for practitioners is that LLP is not a good match for perfectly regular, bandwidth-limited computations. Its value lies in problems with irregular dependency structures where the cost of the abstraction is amortised over the savings from demand-driven scheduling---exactly the pattern observed in SSSP, BFS, Stable Marriage, and job scheduling.

\subsection{Transitive Closure}
\label{sec:closure}

\subsubsection{Formulation}
Given a directed graph \(G=(V,E)\), the transitive closure problem asks for the set of all ordered pairs \((u,v)\) such that a directed path exists from \(u\) to \(v\). The classical parallel approach is Floyd--Warshall: it iterates over intermediate vertices \(k=1,\ldots,|V|\), and at each step updates every pair \((u,v)\) by checking whether the path \(u \to k \to v\) exists. The algorithm is \(O(|V|^3)\) regardless of graph density, which means it does the same amount of work on a 500-vertex DAG with 17k edges as on a 500-vertex clique. This is wasteful on sparse graphs, where only a small fraction of the \(|V|^2\) pairs are reachable.

The LLP formulation exploits sparsity. The global state is a boolean matrix \(\C[u,v]\) indicating current reachability, ordered pointwise: once a pair is marked reachable it never reverts. A pair \((u,v)\) is \emph{forbidden} if \(\C[u,v]\) is still \texttt{false} but there exists an intermediate vertex \(w\) with \(\C[u,w] = \C[w,v] = \texttt{true}\). Advancing the state flips \(\C[u,v]\) to \texttt{true} and potentially enables further pairs higher in the lattice. On sparse DAGs, most pairs are never reachable, so the LLP solver only examines the \(O(|E| \cdot |V|)\) pairs that actually participate in reachability chains rather than all \(|V|^3\) triples. On dense graphs, however, nearly every pair eventually becomes reachable, and the advantage disappears.

\subsubsection{Implementation}
The implementation stores rows of the reachability matrix as compressed bitsets. Threads pop forbidden pairs from the worklist, splice the relevant source and destination rows with atomic \texttt{fetch\_or} operations, and push any newly discovered reachable pairs back onto the worklist. The \emph{all-parallel} solver pushes every discovered pair immediately and tends to saturate cores on sparse DAGs, while the \emph{naive, cyclic} solver uses round-robin queues that mimic a serial closure pass---cycling through source vertices and expanding each row in turn. On the smallest DAG (\texttt{DAG2}), the cyclic approach is actually competitive because the overhead of worklist management exceeds the cost of a brute-force scan; on larger instances, the demand-driven approach wins. The baseline is a parallel Floyd--Warshall that distributes the outer \(k\)-loop across threads using a shared-memory barrier at each intermediate vertex.

\subsubsection{Setup}
We benchmark three synthetic DAGs from the Galois suite and one dense social network (\texttt{Wiki-Vote} from the SNAP collection). Table~\ref{tab:closure-setup} summarises the input characteristics. All runs report the median of five executions on the platform from Section~\ref{sec:cm-eval} with thread counts from 1 to 32 (Wiki-Vote starts at 2 threads due to the memory cost of the dense reachability matrix).

\begin{table}[htb]
    \centering
    \small
    \begin{tabular}{lccc}
        \toprule
        \textbf{Dataset} & \textbf{Category} & \textbf{|V|} & \textbf{|E|} \\
        \midrule
        \texttt{DAG1} & DAG & 1\,998 & 47\,506 \\
        \texttt{DAG2} & DAG & 500 & 17\,571 \\
        \texttt{DAG3} & DAG & 1\,000 & 18\,678 \\
        \texttt{Wiki-Vote} & Social & 8\,298 & 100\,762 \\
        \bottomrule
    \end{tabular}
    \caption{Transitive closure datasets. DAG instances follow the Galois suite; \texttt{Wiki-Vote} is the SNAP social network.}
    \label{tab:closure-setup}
\end{table}

\subsubsection{Results and Discussion}
\label{sec:closure-results}

Transitive closure presents the sharpest regime split in the paper: LLP wins decisively on sparse DAGs and loses clearly on the dense social graph. The same LLP code handles both cases---only the graph density changes.

At one thread, the LLP solver achieves \(7\)--\(23\times\) speedups over Floyd--Warshall on all three DAGs. These gains are entirely algorithmic: Floyd--Warshall performs \(O(|V|^3)\) work regardless of density, while the LLP solver only processes pairs that actually become reachable through existing edges. Figure~\ref{fig:closure-scaling} reports strong scaling on \texttt{DAG1}. As thread counts grow, Floyd--Warshall's regular structure allows it to scale well (\(15.8\times\) self-speedup at 32 threads), narrowing LLP's advantage from \(23\times\) at one thread to \(2.5\times\) at 32 threads. The same pattern holds on the other DAGs: LLP's edge comes from doing less total work rather than from better parallel efficiency.

\begin{figure}[htb]
    \centering
    \includegraphics[width=0.48\textwidth]{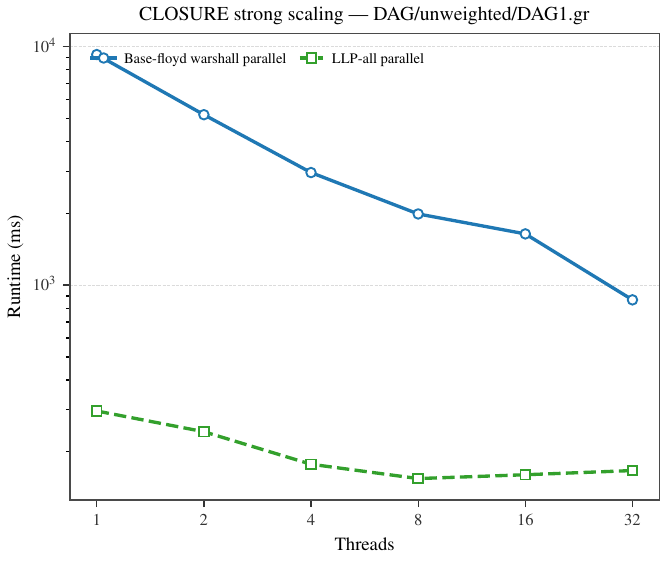}\hfill
    \includegraphics[width=0.48\textwidth]{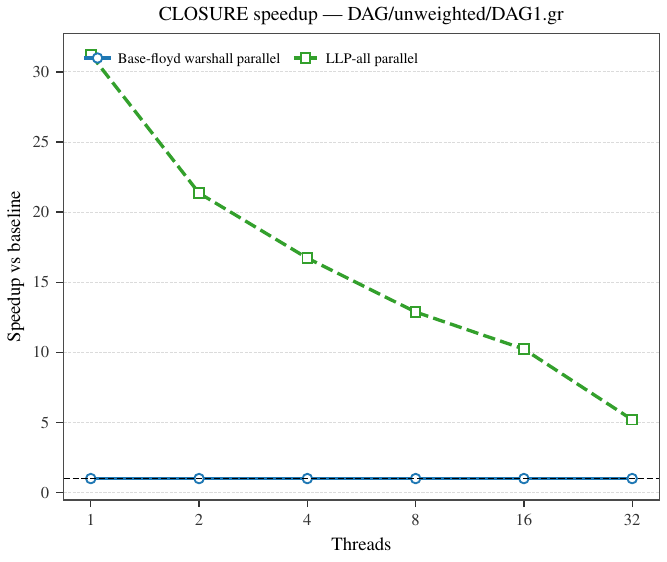}
    \caption{Runtime (left) and speedup versus Floyd--Warshall (right) for \texttt{DAG1}.}
    \label{fig:closure-scaling}
\end{figure}

\smallskip
\noindent\textbf{Dense networks: LLP loses ground.}
Figure~\ref{fig:closure-dense} shows the opposite regime on \texttt{Wiki-Vote}. The social-network community structure means that a large fraction of all \(|V|^2\) pairs are reachable, so the demand-driven approach no longer saves work but still pays the per-pair overhead of atomic operations. Floyd--Warshall wins by roughly \(0.67\times\) at 32 threads, consistent across thread counts. As with the SSSP citation-network results, when most of the problem domain is simultaneously active, the LLP overhead dominates.

\begin{figure}[htb]
    \centering
    \includegraphics[width=0.48\textwidth]{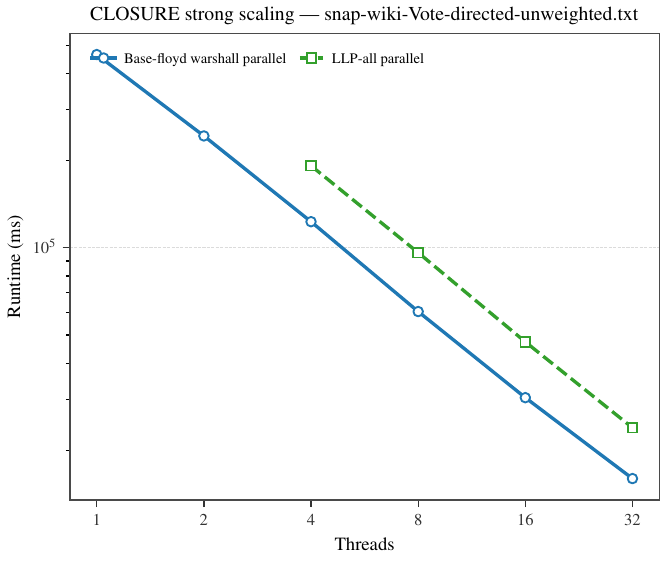}\hfill
    \includegraphics[width=0.48\textwidth]{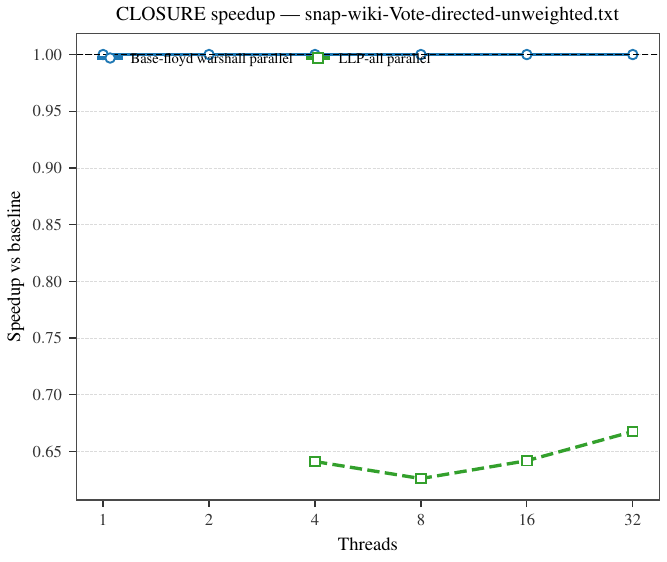}
    \caption{Runtime (left) and speedup (right) for \texttt{Wiki-Vote}. The dense edge set limits LLP scalability.}
    \label{fig:closure-dense}
\end{figure}

\smallskip
\noindent\textbf{Summary.}
Transitive closure crystallises the core trade-off of the LLP abstraction more clearly than any other benchmark in this paper. On sparse graphs, the demand-driven approach performs dramatically less work than the \(O(|V|^3)\) Floyd--Warshall baseline, delivering up to \(23\times\) single-thread gains that persist (at reduced magnitude) through 32 threads. On dense graphs with near-complete reachability, LLP's coordination overhead exceeds its work-reduction benefit and Floyd--Warshall's regular parallelism prevails. For a practitioner, the decision criterion is straightforward: if the expected reachability fraction is small relative to \(|V|^2\), LLP is the better choice; if the closure is expected to be dense, Floyd--Warshall's predictable \(O(|V|^3)\) cost is preferable.

\subsection{Knapsack}
\label{sec:knapsack}

\subsubsection{Formulation}
The 0--1 knapsack problem selects a subset of items \(i \in \{1,\ldots,n\}\) with weights \(w_i\) and values \(v_i\) that maximises total value without exceeding a capacity \(C\). The standard parallel approach fills a two-dimensional DP table row by row: for each item \(k\) and capacity \(c\), the recurrence \(\max(\C[k{-}1,c],\; v_k + \C[k{-}1,c{-}w_k])\) determines the optimal value. Rows depend only on the previous row, so all capacities within a row can be computed in parallel, but the next row cannot begin until the current one is complete. This row-synchronous structure is similar to the level-synchronous pattern in BFS and Floyd--Warshall, and suffers from the same bottleneck: threads must wait for the widest row to finish before advancing.

In the LLP interpretation, each DP cell \(\C[k,c]\) is a lattice coordinate. The lattice order is pointwise: once a cell records a higher value it never decreases. A cell becomes \emph{forbidden} when its current value is stale relative to the recurrence applied with the latest predecessor values. Advancing the state raises the cell to the tight bound and potentially enables cells in the next row. Because cells in different rows can be advanced concurrently as soon as their predecessors are ready, the LLP solver is not constrained to process one complete row before starting the next---it can overlap rows whenever the dependency chain permits.

\subsubsection{Implementation}
We tile the DP table by capacity ranges so that each tile aggregates a contiguous strip of capacities for the same item index. Every tile carries the item index \(k\), the capacity range \([c_{\min}, c_{\max})\), and the current best values stored in a packed vector of \texttt{AtomicI64}. When a tile is forbidden we atomically apply the recurrence using the latest predecessor values and publish the updated maxima. The tiling strategy determines how much independent work is available to the scheduler: narrow tiles expose more parallelism but increase worklist management overhead; wide tiles reduce overhead but limit the number of concurrent work items.

The baseline is a multi-threaded DP implementation (\texttt{baseline - dp mt}) that distributes capacity ranges across threads within each row and synchronises at row boundaries.

\subsubsection{Setup}
We evaluate six synthetic item sets with uniformly sampled weights and values. Capacities scale linearly with item count. Table~\ref{tab:knapsack-setup} summarises the workload family. Multi-thread data (1--32 threads) is available for instances with 500 or more items; smaller instances have single-thread data only. All experiments report medians over repeated Criterion samples on the platform described in Section~\ref{sec:cm-eval}.

\begin{table}[htb]
    \centering
    \small
    \begin{tabular}{lccc}
        \toprule
        \textbf{Dataset} & \textbf{Items} & \textbf{Capacity} & \textbf{Fill ratio} \\
        \midrule
        \texttt{n100\_cap250} & 100 & 250 & 3.54 \\
        \texttt{n500\_cap800} & 500 & 800 & 5.58 \\
        \texttt{n1000\_cap1500} & 1\,000 & 1\,500 & 5.99 \\
        \texttt{n1000\_cap1600} & 1\,000 & 1\,600 & 5.61 \\
        \texttt{n2000\_cap3200} & 2\,000 & 3\,200 & 5.61 \\
        \texttt{n4000\_cap6400} & 4\,000 & 6\,400 & 5.62 \\
        \bottomrule
    \end{tabular}
    \caption{Knapsack benchmarks derived from uniformly sampled weights and values. Fill ratio denotes the mean weight-to-capacity ratio across items.}
    \label{tab:knapsack-setup}
\end{table}

\subsubsection{Results and Discussion}
\label{sec:knapsack-results}

Knapsack presents a different performance dynamic from the previous algorithms. The LLP solver carries a significant single-thread overhead, but the baseline's own scaling behaviour is so poor that LLP overtakes it at high thread counts. The result is a win for LLP, but one that depends more on the baseline's weakness than on LLP's strength.

\smallskip
\noindent\textbf{Single-thread overhead.}
At one thread, the DP baseline is substantially faster than LLP on every instance. On \texttt{n500\_cap800}, the baseline finishes in \(0.57\)~ms versus \(2.69\)~ms for the best LLP solver (PTWB), a \(4.7\times\) overhead. On the largest instance (\texttt{n4000\_cap6400}), the gap is similar: \(34.2\)~ms versus \(169.4\)~ms (\(5.0\times\)). The overhead comes from two sources. First, every DP-cell update is performed through an atomic compare-and-swap on a packed \texttt{AtomicI64}, whereas the baseline uses plain stores within each thread's capacity range. Second, the LLP solver evaluates the forbiddenness predicate for each tile it pops---reading predecessor values from the previous row---before deciding whether an update is needed. The baseline avoids this check because the row-synchronous structure guarantees that predecessors are always ready.

\smallskip
\noindent\textbf{Baseline comparison at 32 threads.}
Table~\ref{tab:knapsack-llp-baseline} reports the best LLP configuration versus the DP baseline at 32 threads. LLP wins on all four multi-thread instances, with speedups ranging from \(1.82\times\) to \(3.79\times\). The wins are real but their interpretation requires understanding why the baseline degrades.

\begin{table}[htb]
    \centering
    \resizebox{\linewidth}{!}{%
    \begin{tabular}{lcccccccc}
        \toprule
        \textbf{Dataset} & \textbf{Items} & \textbf{Capacity} & \textbf{Baseline (ms)} & \textbf{Baseline (M items/s)} & \textbf{Best LLP} & \textbf{LLP (ms)} & \textbf{LLP (M items/s)} & \textbf{Speedup} \\
        \midrule
        \texttt{n500\_cap800} & 500 & 800 & 37.45 & 0.01 & SWB & 9.88 & 0.05 & 3.79\,\(\times\) \\
        \texttt{n1000\_cap1600} & 1\,000 & 1\,600 & 87.60 & 0.01 & SWB & 32.11 & 0.03 & 2.73\,\(\times\) \\
        \texttt{n2000\_cap3200} & 2\,000 & 3\,200 & 199.48 & 0.01 & Buckets & 109.44 & 0.02 & 1.82\,\(\times\) \\
        \texttt{n4000\_cap6400} & 4\,000 & 6\,400 & 564.77 & 0.01 & Buckets & 232.65 & 0.02 & 2.43\,\(\times\) \\
        \bottomrule
    \end{tabular}}
    \caption{Thirty-two-thread comparison between the dynamic-programming baseline and the fastest LLP solver per dataset. Throughput is expressed in millions of items processed per second.}
    \label{tab:knapsack-llp-baseline}
\end{table}

\smallskip
\noindent\textbf{Scaling behaviour: LLP wins because the baseline anti-scales.}
Figure~\ref{fig:knapsack-scaling} reports strong scaling on \texttt{n4000\_cap6400}. The most striking feature is the baseline's trajectory: it starts at \(34.2\)~ms with one thread, improves to \(19.5\)~ms at two threads, then \emph{degrades} steadily---\(23.4\)~ms at four, \(62.2\)~ms at eight, \(245.4\)~ms at sixteen, and \(564.8\)~ms at thirty-two threads. This is severe anti-scaling: the 32-thread baseline is \(16.5\times\) slower than its own single-thread performance. The cause is the row-synchronous barrier. Each row of the DP table contains only \(C = 6{,}400\) capacity cells, distributed among 32 threads that is only 200 cells per thread. The work per row per thread is so small that the cost of synchronising at the row boundary dominates actual computation time, and this overhead grows with thread count.

The LLP solver avoids this trap. Its runtime is roughly flat from one through eight threads (\(169\)~ms), reflecting the overhead of atomic operations on what is still a small working set. At 32 threads, the bucketed scheduler reaches \(232.7\)~ms. The LLP solver does not \emph{improve} much with threads either---the DP table does not expose enough independent work for 32 cores---but it crucially does not degrade. The speedup at 32 threads (\(2.43\times\)) therefore comes primarily from the baseline getting worse rather than from LLP getting better.

\begin{figure}[htb]
    \centering
    \includegraphics[width=0.48\textwidth]{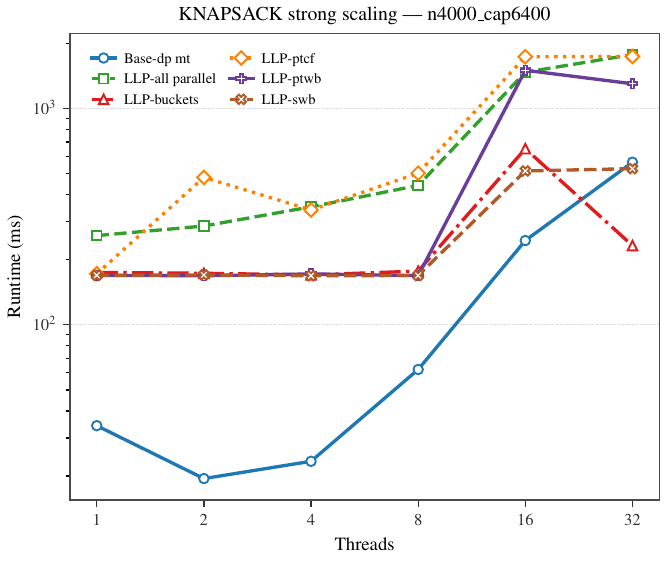}\hfill
    \includegraphics[width=0.48\textwidth]{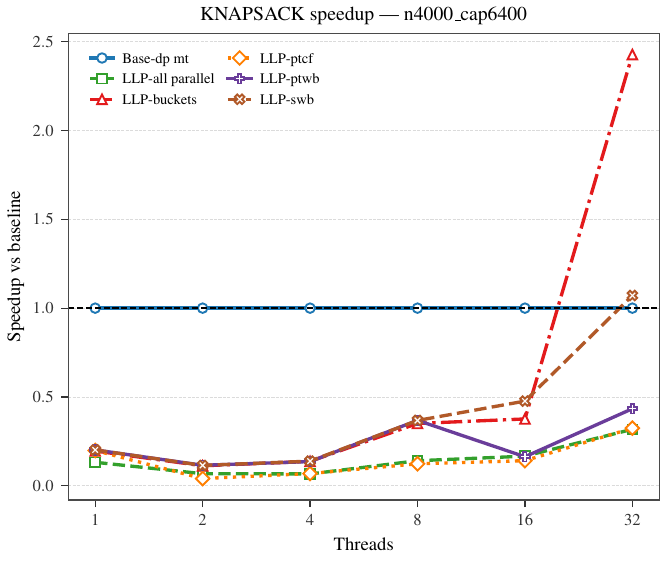}
    \caption{Strong scaling (left) and speedup relative to the dynamic-programming baseline (right) on \texttt{n4000\_cap6400}.}
    \label{fig:knapsack-scaling}
\end{figure}

On the smaller \texttt{n500\_cap800}, the same pattern appears in compressed form: the baseline degrades from \(0.57\)~ms at one thread to \(37.5\)~ms at 32 threads (\(66\times\) anti-scaling), while LLP stays near \(2.7\)--\(9.9\)~ms. The LLP crossover happens at eight threads (\(1.51\times\)), reaching \(3.79\times\) at 32 threads.

\smallskip
\noindent\textbf{Worklist sensitivity.}
Figure~\ref{fig:knapsack-worklists} compares worklist policies on \texttt{n4000\_cap6400} at 32 threads. Bucketed scheduling is the clear winner at \(232.7\)~ms, followed by SWB at \(527.4\)~ms, PTWB at \(1{,}301.8\)~ms, and PTCF at \(1{,}740.1\)~ms. The ranking is different from all previous algorithms: buckets win because they group tiles by capacity range, so threads processing adjacent capacity strips benefit from spatial locality in the DP table. SWB is the runner-up because its global queue provides reasonable load balance on the narrow tile frontier. PTWB, which dominated in SSSP and BFS, performs poorly here because its per-thread partitioning scatters non-adjacent tiles across cores---there is no spatial frontier to exploit, only a dependency front that advances one item row at a time.

On smaller instances the picture reverses: on \texttt{n500\_cap800} at 32 threads, SWB wins (\(9.88\)~ms) because the capacity range is narrow enough that all tiles fit in shared cache regardless of ordering. This sensitivity to instance size is characteristic of DP workloads where the available parallelism is bounded by the table width.

\begin{figure}[htb]
    \centering
    \includegraphics[width=0.48\textwidth]{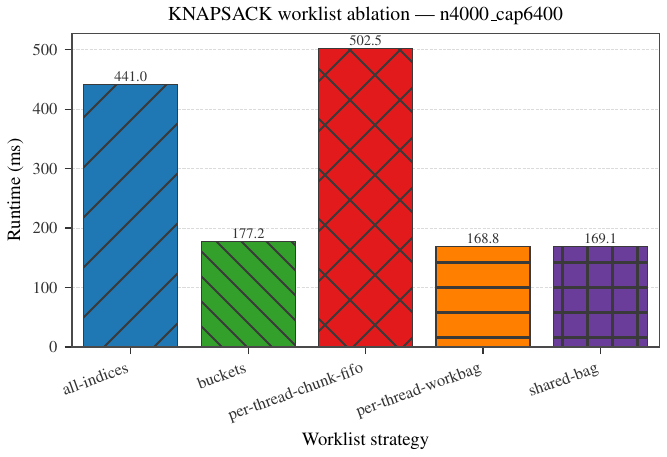}\hfill
    \includegraphics[width=0.48\textwidth]{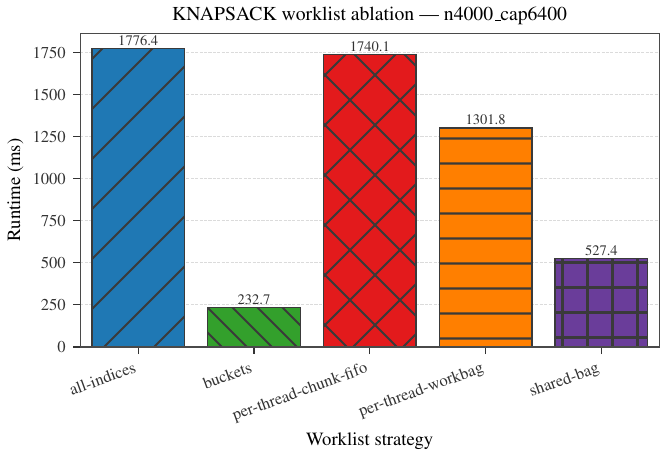}
    \caption{Worklist diagnostics on \texttt{n4000\_cap6400}. Left: eight threads; right: 32 threads. Bucketed queues keep more tiles in flight once thread count grows.}
    \label{fig:knapsack-worklists}
\end{figure}

\smallskip
\noindent\textbf{Summary.}
Knapsack is a qualified success for LLP. The framework wins at 32 threads on every large instance (\(1.82\)--\(3.79\times\)), but the win comes from the baseline's severe anti-scaling rather than from LLP's own parallel efficiency. At low thread counts, the DP baseline is faster and LLP adds measurable overhead. The practical lesson is that LLP is a good choice for DP workloads that will be deployed at moderate-to-high thread counts, especially when the table structure leads to synchronisation-heavy baselines. The worklist recommendation also differs from the graph algorithms: bucketed scheduling, which groups tiles by problem structure, outperforms the per-thread partitioning strategies that dominated in SSSP and BFS.

\section{Cross-Algorithm Analysis}
\label{sec:cross-analysis}

Having evaluated the seven algorithms in the preceding sections, we now look at the results as a whole and try to identify the patterns that determine when LLP-FW does well and when it does not. The algorithms span graph traversal (SSSP, BFS), combinatorial matching (Stable Marriage), DAG scheduling (Job Scheduling), tree-structured computation (Parallel Reduction), set closure (Transitive Closure), and dynamic programming (Knapsack). Each of these algorithms exercises the LLP abstraction differently and the individual results already hint at some trends.

\subsection{Where LLP excels}

Looking across the results, we observe that LLP-FW performs best when the following conditions are present.

The clearest wins are on problems where only a small fraction of the domain is forbidden at any given time, i.e. when the forbidden frontier is narrow or structured. On road-network SSSP and BFS, the frontier is long but thin---each level contains relatively few forbidden vertices spread over a large graph diameter. The LLP solver exploits this by giving each thread a private queue partition, which eliminates the global-queue contention that hurts the baseline at high thread counts. We can see this in the BFS results: \(16\times\) on road-network BFS at 32 threads. SSSP on the same road network yields a more modest \(1.5\times\) at 32 threads, limited by the higher per-vertex cost of weighted relaxation. Sparse-DAG transitive closure shows the same effect from a different angle: the reachability set is small relative to \(|V|^2\), so LLP processes far fewer pairs than Floyd--Warshall's fixed \(O(|V|^3)\) sweep, yielding up to \(23\times\) at one thread.

Another favorable scenario is when there are cascading dependencies that serialise round-based baselines. Stable Marriage is the extreme case. The parallel Gale--Shapley baseline has to wait for a round of proposals to complete before processing the displacements they cause, but in LLP a displaced man is re-enqueued immediately and can re-propose in the same sweep. This overlap of cascading proposals is what delivers \(246\times\) at 32 threads on the 1\,000-participant instance, and \(109\times\) on the 10\,000-participant instance. Job scheduling shows a milder version of the same effect: the level-synchronous baseline is effectively flat across thread counts on \texttt{Job 10000}, while LLP's demand-driven propagation resolves dependency chains without global synchronisation, producing \(17.8\times\) even at one thread.

Memoization through fixed states also plays an important role. When the problem structure guarantees that an index, once settled, will never become forbidden again, the fixed-state bit vector eliminates redundant predicate checks entirely. Job scheduling benefits the most: each job commits exactly once, so the solver's effective complexity scales with the number of edges in the DAG rather than with repeated vertex scans. This is why LLP outperforms the topological baseline \emph{before any parallelism is applied}---the algorithmic savings dominate the atomic overhead.

\subsection{Where LLP is weak}

On the other hand, we observe that the LLP advantage erodes or disappears in two scenarios.

The first is when the forbidden frontier is wide and persistent on dense inputs. When most of the problem domain is simultaneously forbidden, the LLP solver pays atomic overhead on every operation without a compensating reduction in total work. We can see this in citation-network SSSP (\texttt{coAuthorsDBLP}: \(0.50\times\) at 32 threads), dense-social-network transitive closure (\texttt{Wiki-Vote}: \(0.67\times\)), and large Kronecker BFS (\texttt{kron-2e22}: initially trailing until 16 threads). In each of these cases the baseline's regular, synchronised sweeps are more efficient since they avoid the per-element CAS cost and benefit from predictable memory-access patterns.

The second scenario is when the workload is regular and bandwidth-limited, with no frontier to exploit. Parallel reduction is the purest example of this. The reduction tree is perfectly regular, every node does the same amount of work, and the dependency structure is fixed at compile time. LLP's coordination machinery adds overhead on every combine step without any opportunity for demand-driven scheduling to save work. The baseline wins by \(4\)--\(5\times\) on the largest reduction. Note that this is not a failure of the LLP \emph{formulation}---the solver does produce correct results---but it shows that the abstraction's value lies in irregular problems where the cost of generality is amortised.

\subsection{Scheduler selection across algorithms}

Another recurring observation is that no single worklist policy dominates all problems. Table~\ref{tab:cross-scheduler} summarises the best scheduler per algorithm and the mechanism behind it.

\begin{table}[htb]
    \centering
    \resizebox{\linewidth}{!}{%
    \begin{tabular}{lll}
        \toprule
        \textbf{Algorithm} & \textbf{Best scheduler} & \textbf{Mechanism} \\
        \midrule
        SSSP & PTWB & Recency bias: prioritise vertices whose distances just decreased \\
        BFS & PTWB & Spatial partitioning: contiguous neighbour slabs stay cache-local \\
        Stable Marriage & PTCF & Chunked FIFO: amortise cascade re-evaluation overhead \\
        Job Scheduling & Any (within 15\%) & Memoization eliminates redundant work; scheduling matters less \\
        Parallel Reduction & PTWB & Subtree locality, though no policy closes the baseline gap \\
        Transitive Closure & All-parallel & Maximise pair discovery rate on sparse frontiers \\
        Knapsack & Buckets & Group tiles by capacity for DP-table spatial locality \\
        \bottomrule
    \end{tabular}}
    \caption{Best-performing LLP scheduler per algorithm. The dominant mechanism differs in each case, reflecting the diversity of frontier structures across problem domains.}
    \label{tab:cross-scheduler}
\end{table}

As we can see from the table, the optimal scheduler tracks the \emph{structure of the forbidden frontier} rather than the problem's formal domain. Graph algorithms with narrow frontiers do best with per-thread partitioning (PTWB); matching problems with long cascades do best with chunked FIFOs (PTCF); DP problems with regular tile grids work best with bucketed grouping; and algorithms where memoization eliminates re-processing are insensitive to the scheduler choice. This suggests that an adaptive scheduler that detects frontier shape at runtime could capture most of the per-algorithm tuning benefit automatically, which is a direction we leave for future work.

\subsection{The cost of generality}

Across all seven algorithms, we observe that LLP-FW incurs a per-operation overhead (atomic reads, compare-and-swap updates, worklist management) that specialised baselines avoid. In the best cases (job scheduling, sparse closure, road-network BFS), this overhead is dwarfed by algorithmic savings or baseline anti-scaling. In the worst case (parallel reduction), the overhead is the dominant cost and LLP loses. In the middle ground (power-law SSSP, citation BFS, knapsack), the outcome depends on thread count and input density.

This trade-off is inherent to the framework's design: the same lock-free predicate-check-and-advance loop handles all seven algorithms without modification. A bespoke implementation for any single algorithm could avoid atomic operations where contention is provably absent, use problem-specific data structures, and eliminate the worklist abstraction entirely. The value of LLP-FW is that it \emph{amortises engineering effort} across problems: implementing a new lattice-linear algorithm means inheriting the parallel runtime, the worklist library, and the memoization infrastructure without writing any concurrency code. Our evaluations show that this generality is free or beneficial on five of seven workloads at moderate-to-high thread counts, and costly only on the most regular, bandwidth-limited computation in the suite.

\section{Conclusion}
\label{sec:cm-conclusion}

In this paper we introduced LLP-FW, a lock-free shared-memory framework that separates problem-specific logic (forbiddenness predicates and advance rules) from the parallel solver policy (worklists, scheduling, and concurrency strategy). Using a single Rust runtime built on atomic compare-and-swap operations, we instantiated and evaluated seven algorithms---Single-Source Shortest Path, Breadth-First Search, Stable Marriage, Job Scheduling, Parallel Reduction, Transitive Closure, and 0--1 Knapsack---without rewriting any synchronisation code between problems.

\subsection{Summary of results}

The evaluation shows that the results fall into roughly three categories depending on the problem characteristics.

For problems with narrow or structured frontiers and cascading dependencies, LLP-FW delivers order-of-magnitude improvements over the baselines. Stable Marriage reaches \(246\times\) at 32 threads on the 1\,000-participant instance and \(109\times\) on the 10\,000-participant instance with near-linear self-speedup (\(27.9\times\) across 32 threads), because the LLP solver overlaps cascading re-proposals that the round-based Gale--Shapley baseline must serialise. Sparse transitive closure achieves up to \(23\times\) at one thread on DAG workloads because the demand-driven solver examines only reachable pairs rather than the \(O(|V|^3)\) triples that Floyd--Warshall processes. Road-network BFS reaches \(16\times\) at 32 threads thanks to the elimination of global-queue contention on narrow frontiers, and SSSP on sparse power-law graphs reaches up to \(4.7\times\). Job scheduling delivers \(17.8\times\) at one thread on deep DAGs through memoization that prevents redundant re-evaluation of settled jobs.

For problems where the baseline scales poorly or where the gains are more moderate, LLP-FW still wins but by smaller margins. Knapsack achieves \(1.8\)--\(3.8\times\) at 32 threads, but the gains largely reflect the DP baseline's severe anti-scaling under row-synchronous barriers. Citation-network BFS delivers \(2.7\times\) at 32 threads, with the crossover occurring around eight threads as LLP's overhead is amortised. Power-law SSSP and Kronecker BFS show modest gains (\(1.2\)--\(1.6\times\)) that follow the density-dependent transition between narrow and wide frontiers.

For regular, bandwidth-limited workloads, LLP-FW loses to specialised baselines. Parallel reduction---the most regular workload in the suite---runs \(4\)--\(5\times\) slower than the cache-optimised baseline at 32 threads since the per-combine atomic overhead is paid on every operation with no compensating reduction in work. Citation-network SSSP (\(0.5\times\)) and dense-graph transitive closure (\(0.67\times\)) also lose because wide, persistent frontiers cause the atomic coordination cost to dominate.

\subsection{Key insights}

Looking across all the results, a few observations stand out.

Perhaps the most important one is that \textbf{frontier shape predicts LLP effectiveness} more reliably than problem domain or input size. Narrow, structured frontiers (road networks, sparse DAGs) consistently favour LLP while wide, dense frontiers (citation networks, social graphs) consistently favour synchronous baselines. This is useful for practitioners considering the framework for a new problem: if the workload has narrow frontiers where only a small fraction of elements are forbidden at any time, LLP-FW is likely a good fit.

We also observe that \textbf{scheduler choice matters a lot} and varies across algorithms. PTWB dominates on graph traversal through recency bias, PTCF wins on matching through cascade amortisation, buckets win on knapsack through spatial grouping, and job scheduling is insensitive to the scheduler because memoization eliminates the work that scheduling would otherwise optimise. No single policy is universally best, which means the scheduler is not an implementation detail but a design choice that has to be matched to the problem.

Finally, we observe that \textbf{the cost of generality is bounded}. LLP-FW's per-operation overhead (atomic reads, compare-and-swap, worklist management) is the price of a single runtime serving all seven algorithms. On five of seven workloads at moderate-to-high thread counts, this cost is offset by algorithmic benefits or baseline weaknesses. On the remaining two (reduction and dense-frontier cases), the cost dominates. Importantly though, the overhead does not grow with problem size---it is a constant factor per operation---so it becomes proportionally smaller on larger, more irregular instances.

\subsection{Future work}

There are several directions that follow naturally from our results. As discussed in Section~\ref{sec:cross-analysis}, frontier shape determines the optimal scheduler, yet the current implementation requires the user to choose a policy before the solver starts. An interesting extension would be a runtime monitor that samples frontier width and depth periodically and switches policies mid-computation. The main challenge here is making the switch lock-free: draining one worklist into another without losing or duplicating items requires careful coordination, but the existing atomic infrastructure provides the building blocks.

Another direction is reducing the atomic overhead on low-contention indices. The single-thread overhead we observe in SSSP (\(\approx\)10\(\times\)) and knapsack (\(\approx\)5\(\times\)) comes largely from compare-and-swap operations on indices that no other thread is contending for. A speculative optimisation could allow the solver to use plain stores when a thread can prove it holds exclusive access to a region of the state vector, falling back to CAS only when contention is detected.

The weak results on citation networks (SSSP, BFS) and dense social graphs (transitive closure) suggest that LLP-FW would benefit from a frontier-width-aware strategy for these dense-frontier cases. One possibility is a hybrid solver that uses LLP's demand-driven approach while the frontier is narrow and falls back to a synchronous bulk-processing mode when frontier width exceeds a threshold.

Beyond shared-memory multicore machines, the LLP abstraction---local predicates, monotone advances, demand-driven scheduling---maps naturally onto message-passing systems where each node maintains a partition of the global state and exchanges updates with neighbours. GPU acceleration is also a candidate since the predicate check and advance step are both data-parallel operations that could benefit from SIMT execution, provided the worklist can be managed efficiently on the device.

Finally, the seven algorithms in this paper were chosen to span a range of problem structures, but they do not exhaust the space of lattice-linear problems. Network flow, constraint satisfaction, and certain game-theoretic equilibrium computations have been shown to admit LLP formulations in the theoretical literature. Implementing these within LLP-FW and evaluating them against specialised baselines would further clarify the boundaries of the framework's applicability.

% Keep SSSP exploratory figures within SSSP section; hide global copies

% Footer for self-contained arxiv version (skipped in thesis mode)
%
%%%%%%%%%%%%%%%%%%%%%%%%%%%%%%%%%%%%%%%%%%%%%%%%%%%%%%%%%%%%%%%%%%%%%%
% Generate the bibliography.					     %
%%%%%%%%%%%%%%%%%%%%%%%%%%%%%%%%%%%%%%%%%%%%%%%%%%%%%%%%%%%%%%%%%%%%%%
%								     %
%\nocite{*}
\bibliographystyle{plain}
\bibliography{diss}

\clearpage
\appendix

\end{document}